\newcommand{\R}{\mathbb{R}}
\newcommand{\Rop}{\mathbb{R}^{\text{op}}}
\newcommand{\Rext}{\mathbb{R}_{\text{Ext}}}
\newcommand{\N}{\mathbb{N}}
\newcommand{\Z}{\mathbb{Z}}
\newcommand{\Reeb}{\mathrm{R}}
\newcommand{\Ord}{\mathrm{Ord}}
\newcommand{\Ext}{\mathrm{Ext}}
\newcommand{\Rel}{\mathrm{Rel}}
\newcommand{\Dg}{\mathrm{Dg}}
\newcommand{\im}{\mathrm{im}}
\newcommand{\Merge}{\mathrm{Merge}}
\newcommand{\Crit}{\mathrm{Crit}}
\newcommand{\distfd}{d_{\rm FD}}
\newcommand{\distb}{d_{\rm B}}
\newcommand{\hdistfd}{\hat{d}_{\rm FD}}
\newcommand{\hdistb}{\hat{d}_{\rm B}}
\newcommand{\e}{\varepsilon}
\newcommand{\const}{22}
\newcommand{\RS}{{\sf Reeb}}
\newtheorem{thm}{Theorem}[section]
\newtheorem{corollary}[thm]{Corollary}
\newtheorem{proposition}[thm]{Proposition}
\newtheorem{lemma}[thm]{Lemma}
\newtheorem{remark}[thm]{Remark}
\newtheorem{definition}[thm]{Definition}
\title{Local Equivalence and Intrinsic Metrics between Reeb Graphs
\footnote{This work was partially supported by ERC Grant
Agreement No. 339025 GUDHI (Algorithmic Foundations of Geometry Understanding in Higher Dimensions) and was carried out while the second author was visiting the ICERM at Brown University.}}
\author{Mathieu Carri\`ere, Steve Oudot}
\begin{document}

\maketitle

\begin{abstract}
As graphical summaries for topological spaces and maps, Reeb graphs
are common objects in the computer graphics or topological data
analysis literature.  Defining good metrics between these objects has
become an important question for applications, where it matters to
quantify the extent by which two given Reeb graphs differ.  Recent
contributions emphasize this aspect, proposing novel distances such as
{\em functional distortion} or {\em interleaving} that are provably
more discriminative than the so-called {\em bottleneck distance},
being true metrics whereas the latter is only a pseudo-metric. Their
main drawback compared to the bottleneck distance is to be
comparatively hard (if at all possible) to evaluate. Here we take the
opposite view on the problem and show that the bottleneck distance is
in fact good enough {\em locally}, in the sense that it is able to
discriminate a Reeb graph from any other Reeb graph in a small enough
neighborhood, as efficiently as the other metrics do.
%where it is a true metric and furthermore equivalent to as discriminative as the other two. 
This suggests considering the
{\em intrinsic metrics} induced by these distances, which turn out to
be all {\em globally} equivalent. 
This novel viewpoint on the study of Reeb graphs has a potential impact on
applications, where one may not only be interested in discriminating
between data but also in interpolating between them.
\end{abstract}

%%%%%%%%%%%%%%%%%%%%%%%%%%%%%%%%%%%%%%%%%%%%%%%%%%%%%%%%%%%%%%%%%%%%%%%%%%%%%%%%%
\section{Introduction}
%%%%%%%%%%%%%%%%%%%%%%%%%%%%%%%%%%%%%%%%%%%%%%%%%%%%%%%%%%%%%%%%%%%%%%%%%%%%%%%%%

In the context of shape analysis, the Reeb graph~\cite{Reeb46} provides a meaningful summary
of a topological space and a real-valued function defined on that space.
Intuitively, it continuously collapses the connected components of the level sets of the function into single points, thus tracking
the values of the functions at which the connected components merge or split.
Reeb graphs have been widely used in computer graphics and visualization---see~\cite{Biasotti08} for a survey, 
and their discrete versions, including the so-called {\em Mappers}~\cite{Singh07}, have become emblematic tools
of topological data analysis due to their success in applications~\cite{NBA, Barra14, Lum13, Nicolau11}. 

Finding relevant dissimilarity measures for comparing Reeb graphs has become an important question in the recent years.
The quality of a dissimilarity measure is usually assessed through three criteria: 
its ability to satisfy the axioms of a metric, its discriminative power, and its computational efficiency.  	
The most natural choice to begin with is to use the {\em Gromov-Hausdorff distance} $d_{\rm GH}$~\cite{Burago01} for Reeb graphs 
seen as metric spaces.  
The main drawback of this distance is to quickly become intractable to compute in practice, even for graphs that are metric trees~\cite{Agarwal15}.
Among recent contributions, the {\em functional distortion distance}
$\distfd$~\cite{Bauer14} and the {\em interleaving distance} $d_{\rm
  I}$~\cite{deSilva16} share the same advantages and drawbacks as
$d_{\rm GH}$, in particular they enjoy good stability and
discriminativity properties but they lack efficient algorithms for
their computation, moreover they can be difficult to interpret.  By
contrast, the {\em bottleneck distance} $\distb$
comes with a signature for Reeb graphs, called the {\em extended
  persistence diagram}~\cite{Cohen09}, which acts as a stable
bag-of-feature descriptor. Furthermore, $\distb$ can be computed
efficiently in practice.  Its main drawback though is to be only a
pseudo-metric, so distinct graphs can have the same signature and
therefore be deemed equal in $\distb$.

Another desired property for dissimilarity measures is to be {\em intrinsic}, i.e. realized as the lengths of shortest continuous 
paths in the space of Reeb graphs~\cite{Burago01}. This is particularly useful when one actually needs to interpolate between data, 
and not just discriminate between them, which happens in applications such as 
image or 3-d shape morphing, skeletonization, and matching~\cite{Ge11, Mohamed12, Mukasa06, Tierny06}. 
At this time, it is unclear whether the metrics proposed so far for Reeb graphs are intrinsic or not.   
Using intrinsic metrics would not only open the door to the use of Reeb graphs in the aforementioned applications, 
but it would also provide a better understanding of the intrinsic structure of the space of Reeb graphs, and give a deeper meaning to the distance values.

\subparagraph*{Our contributions.}  In the first part of the paper we show that the
bottleneck distance can discriminate a Reeb graph from any other Reeb
graph in a small enough neighborhood, as efficiently as the other
metrics do, even though it is only a pseudo-metric globally.  More
precisely, we show that, given any constant $K\in(0,1/22]$, in a
  sufficiently small neighborhood of a given Reeb graph $\Reeb_f$ in
  the functional distortion distance (that is: for any Reeb graph
  $\Reeb_g$ such that $\distfd(\Reeb_f,\Reeb_g)< c(f,K)$, where
  $c(f,K)>0$ is a positive constant depending only on $f$ and $K$),
  one has:
\begin{equation}\label{eq:intro}
K\distfd(\Reeb_f,\Reeb_g)\leq\distb(\Reeb_f,\Reeb_g)\leq 2\distfd(\Reeb_f,\Reeb_g).
\end{equation}
The second inequality is already known~\cite{Bauer14}, and it asserts that the bottleneck distance between Reeb graphs is stable.  
The first inequality is new, and it asserts that the bottleneck distance is discriminative locally, 
in fact just as discriminative as the other distances mentioned above. 
%In some sense (although not in the usual sense), 
Equation~(\ref{eq:intro}) can be viewed as a local equivalence between metrics
although not in the usual sense: firstly, all comparisons are anchored to a fixed Reeb graph $\Reeb_f$, 
and secondly, the constants $K$ and $2$ are absolute.

The second part of the paper advocates the study of intrinsic metrics
on the space of Reeb graphs, for the reasons mentioned above. As
a first step, we propose to study the intrinsic metrics $\hat d_{\rm
  GH}$, $\hdistfd$, $\hat d_{\rm I}$ and $\hdistb$ induced
respectively by $d_{\rm GH}$, $\distfd$, $d_{\rm I}$ and
$\distb$. While the first three are obviously globally equivalent
because their originating metrics are, our second contribution is to
show that the last one is also globally  equivalent to
the other three. 

The paper concludes with a discussion and some directions for the study of the space of Reeb graphs as an intrinsic metric space.

\subparagraph*{Related work.}

Interpolation between Reeb graphs is also the underlying idea of the
{\em edit distance} recently proposed by Di Fabio and
Landi~\cite{DiFabio16}. The problem with this distance, in its current
form at least, is that it restricts the interpolation to pairs of
graphs lying in the same homeomorphism class. By contrast, our class
of admissible paths is defined with respect to the topology induced by
the functional distortion distance, as such it allows interpolating
between distinct homeomorphism classes. 

Interpolation between Reeb
graphs is also related to the study of inverse problems in topological
data analysis.  To our knowledge, the only result in this vein shows
the differentiability of the operator sending point clouds to the
persistence diagram of their distance function~\cite{Gameiro16}.  Our
first contribution~(\ref{eq:intro}) sheds light on the operator's
local injectivity properties over the class of Reeb graphs.

%%%%%%%%%%%%%%%%%%%%%%%%%%%%%%%%%%%%%%%%%%%%%%%%%%%%%%%%%%%%%%%%%%%%%%%%%%%%%%%%%
\section{Background}\label{sec:Def}
%%%%%%%%%%%%%%%%%%%%%%%%%%%%%%%%%%%%%%%%%%%%%%%%%%%%%%%%%%%%%%%%%%%%%%%%%%%%%%%%%

Throughout the paper we work with singular homology with coefficients
in the field~$\Z_2$, which we omit in our notations for simplicity. In
the following, ``connected'' stands for ``path-connected'', and ``cc''
stands for ``connected component(s)''. Given a map $f:X\to\R$ and an interval $I\subseteq\R$, we write $X^{I}_f$ as a 
shorthand for the preimage $f^{-1}(I)$, and we omit the subscript when the map is obvious from the context.

\subsection{Morse-Type Functions}
\label{sec:Morse-type}

\begin{definition}\label{def:Morse-type}
A continuous real-valued function $f$ on a topological space $X$ 
is \emph{of Morse type} if:

(i) there is a finite set $\text{Crit}(f)=\{a_1<...<a_n\}\subset\R$, 
called the set of \emph{critical values},
such that over every open interval $(a_0=-\infty,a_1),...,(a_i,a_{i+1}),...,(a_n,a_{n+1}=+\infty)$
there is a compact and locally connected space $Y_i$  
and a homeomorphism $\mu_i:Y_i\times(a_i,a_{i+1})\rightarrow X^{(a_i,a_{i+1})}$
such that $\forall i=0,...,n, f|_{X^{(a_i,a_{i+1})}}=\pi_2\circ\mu_i^{-1}$, where
$\pi_2$ is the projection onto the second factor;

(ii) $\forall i=1,...,n-1,\mu_i$ extends to a continuous function $\bar{\mu}_i:Y_i\times[a_i,a_{i+1}]\rightarrow X^{[a_i,a_{i+1}]}$;
 similarly, $\mu_0$ extends to $\bar{\mu}_0:Y_0\times(-\infty,a_1]\rightarrow X^{(-\infty,a_1]}$
and $\mu_n$ extends to $\bar{\mu}_n:Y_n\times[a_n,+\infty)\rightarrow X^{[a_n,+\infty)}$;

(iii) Each levelset $f^{-1}(t)$ has a finitely-generated homology.
\end{definition}

Let us point out that a Morse function is also of Morse type, and that
its critical values remain critical in the definition above.  Note
that some of its regular values may be termed critical as well in this
terminology, with no effect on the analysis.

\subsection{Extended Persistence}
\label{sec:Persistence}

Let $f$ be a real-valued function on a topological space $X$.  The
family $\{X^{(-\infty, \alpha]}\}_{\alpha\in\R}$ of sublevel sets of
  $f$ defines a {\em filtration}, that is, it is nested
  w.r.t. inclusion: $X^{(-\infty, \alpha]}\subseteq X^{(-\infty,
  \beta]}$ for all $\alpha\leq\beta\in\R$. The family $\{X^{[\alpha,
      +\infty)}\}_{\alpha\in\R}$ of superlevel sets of $f$ is also
    nested but in the opposite direction: $X^{[\alpha,
        +\infty)}\supseteq X^{[\beta, +\infty)}$ for all
        $\alpha\leq\beta\in\R$. We can turn it into a filtration by
        reversing the order on the real line. Specifically, let
        $\Rop=\{\tilde{x}\ |\ x\in\R\}$, ordered by
        $\tilde{x}\leq\tilde{y}\Leftrightarrow x\geq y$. We index the
        family of superlevel sets by $\Rop$, so now we have a
        filtration: $\{X^{[\tilde\alpha,
            +\infty)}\}_{\tilde\alpha\in\Rop}$, with
          $X^{[\tilde\alpha, +\infty)}\subseteq X^{[\tilde\beta,
                +\infty)}$ for all
              $\tilde\alpha\leq\tilde\beta\in\Rop$.

Extended persistence connects the two filtrations at infinity as
follows. First, replace each superlevel set $X^{[\tilde\alpha, +\infty)}$ by
  the pair of spaces $(X, X^{[\tilde\alpha, +\infty)})$ in the second
    filtration. This maintains the filtration property since we have
    $(X, X^{[\tilde\alpha, +\infty)})\subseteq (X, X^{[\tilde\beta,
          +\infty)})$ for all
        $\tilde\alpha\leq\tilde\beta\in\Rop$. Then, let
        $\Rext=\R\cup\{+\infty\}\cup\Rop$, where the order is
        completed by $\alpha<+\infty<\tilde{\beta}$ for all
        $\alpha\in\R$ and $\tilde\beta\in\Rop$. This poset is
        isomorphic to $(\R, \leq)$. Finally, define the {\em extended
          filtration} of $f$ over $\Rext$ by:
\[
%\begin{array}{llll}
F_\alpha = X^{(-\infty, \alpha]}\ \mbox{for $\alpha\in\R$},\ F_{+\infty} = X \equiv (X,\emptyset)\text{ and }
F_{\tilde\alpha} = (X, X^{[\tilde\alpha, +\infty)})\ \mbox{for $\tilde\alpha\in\Rop$},
%\end{array}
\]
where we have identified the space $X$ with the pair of spaces $(X,
\emptyset)$ at infinity. 
The subfamily $\{F_\alpha\}_{\alpha\in\R}$ is
  the \textit{ordinary} part of the filtration, while $\{F_{\tilde\alpha}\}_{\tilde\alpha\in\Rop}$ is the
  \textit{relative} part. 

Applying the homology functor $H_*$ to this filtration gives the so-called \textit{extended persistence module} $\mathbb{V}$ of $f$,
which is a sequence of vector spaces connected by linear maps induced by the
inclusions in the extended filtration.
For functions of Morse type, the extended persistence module can be
decomposed as a finite direct sum of half-open {\em interval
  modules}---see e.g.~\cite{Chazal16}:
$\mathbb{V}\simeq\bigoplus_{k=1}^n \mathbb{I}[b_k, d_k)$,
where each summand $\mathbb{I}[b_k, d_k)$ is made of copies of the
  field of coefficients at every index $\alpha\in [b_k, d_k)$, and of
    copies of the zero space elsewhere, the maps between copies of the
    field being identities.  Each summand represents the lifespan of a
    {\em homological feature} (cc, hole, void, etc.) within the
    filtration. More precisely, the {\em birth time} $b_k$ and {\em
      death time} $d_k$ of the feature are given by the endpoints of
    the interval.  Then, a convenient way to represent the structure
    of the module is to plot each interval in the decomposition as a
    point in the extended plane, whose coordinates are given by the
    endpoints. Such a plot is called the \textit{extended persistence
      diagram} of $f$, denoted $\Dg(f)$.  The distinction
    between ordinary and relative parts of the filtration allows us to
    classify the points in $\Dg(f)$ as follows:
\begin{itemize}
\item $p=(x,y)$ is called an {\em ordinary} point if $x,y\in\R$; 
\item $p=(x,y)$ is called a {\em relative} point if $x,y\in\Rop$;
\item $p=(x,y)$ is called an {\em extended} point if $x\in\R,y\in\Rop$; 
\end{itemize}
Note that ordinary points lie strictly above the diagonal
$\Delta=\{(x,x)\ |\ x\in\R\}$ and relative points lie strictly below $\Delta$,
while extended points can be located anywhere, including on $\Delta$ (e.g. when a cc 
lies inside a single critical level, see Section~\ref{sec:ReebGraphDef}). 
It is common to
partition $\Dg(f)$ according to this classification:
$\Dg(f)=\Ord(f)\sqcup\Rel(f)\sqcup\Ext^+(f)\sqcup\Ext^-(f)$, where by
convention $\Ext^+(f)$ includes the extended points located on the
diagonal~$\Delta$.

\subparagraph*{Stability.}
An important property of extended persistence diagrams is to be stable
in the so-called {\em bottleneck distance}~$d^\infty_{\emph{b}}$.  
Given two persistence diagrams $D,D'$, a \emph{partial matching} between $D$ and $D'$
is a subset $\Gamma$ of $D\times D'$ where for every
$p\in D$ there is at most one $p'\in D'$ such that $(p,p')\in\Gamma$,
and conversely, for every $p'\in D'$ there is at most one $p\in D$
such that $(p,p')\in\Gamma$.
Furthermore, $\Gamma$ must match points of the same type (ordinary,
relative, extended) and of the same homological dimension only.  
The \emph{cost} of $\Gamma$ is:
$\text{cost}(\Gamma)=\max\{\displaystyle\max_{p\in
  D}\ \delta_D(p),\ \max_{p'\in D'}\ \delta_{D'}(p')\}$, where
$\delta_D(p)=\|p-p'\|_\infty$ if $p$ is matched to some $p'\in D'$
 and $\delta_D(p)=d_\infty(p,\Delta)$
if $p$ is unmatched---same for $\delta_{D'}(p')$.
\begin{definition}
\label{def:bottleneck}
The \emph{bottleneck distance} between two persistence diagrams $D$ and $D'$ is
$\distb(D,D')=\inf_{\Gamma}\text{cost}(\Gamma)$,
where $\Gamma$ ranges over all partial matchings between $D$ and $D'$.
\end{definition}

\begin{thm}[Stability~\cite{Cohen09}]
\label{th:Stab}
For any  Morse-type functions $f,g:X\rightarrow\R$,
\begin{equation}
\distb(\Dg(f),\Dg(g))\leq \|f-g\|_\infty.
\end{equation}
\end{thm}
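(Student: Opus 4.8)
The plan is to derive this from the algebraic stability theorem (induced matchings) for interval--decomposable persistence modules, by exhibiting an interleaving between the extended persistence modules of $f$ and $g$. Put $\varepsilon=\|f-g\|_\infty$ and assume $\varepsilon<+\infty$, there being nothing to prove otherwise. The pointwise bound $|f-g|\le\varepsilon$ gives, for every $\alpha\in\R$, inclusions of sublevel sets $X^{(-\infty,\alpha]}_f\subseteq X^{(-\infty,\alpha+\varepsilon]}_g$ and $X^{(-\infty,\alpha]}_g\subseteq X^{(-\infty,\alpha+\varepsilon]}_f$, and of superlevel sets $X^{[\alpha,+\infty)}_f\subseteq X^{[\alpha-\varepsilon,+\infty)}_g$ and $X^{[\alpha,+\infty)}_g\subseteq X^{[\alpha-\varepsilon,+\infty)}_f$; passing to pairs, the latter become inclusions $(X,X^{[\alpha,+\infty)}_f)\subseteq(X,X^{[\alpha-\varepsilon,+\infty)}_g)$ and symmetrically. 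At the index $+\infty$ of the extended filtration of Section~\ref{sec:Persistence}, both functions give the pair $(X,\emptyset)$, and there I use the identity. Let $\Omega_\varepsilon$ be the order-preserving self-map of $\Rext$ that adds $\varepsilon$ on the ordinary part, fixes $+\infty$, and sends $\tilde\alpha\mapsto\widetilde{\alpha-\varepsilon}$ on the relative part; then the inclusions above assemble into morphisms of filtrations $F^f_\bullet\to F^g_{\Omega_\varepsilon(\bullet)}$ and $F^g_\bullet\to F^f_{\Omega_\varepsilon(\bullet)}$ whose two composites recover the structure maps of $F^f_\bullet$ and of $F^g_\bullet$ over $\Omega_{2\varepsilon}=\Omega_\varepsilon\circ\Omega_\varepsilon$. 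All commutativity is automatic, since every map in sight is an inclusion of pairs that is the identity on the ambient space $X$; nothing delicate happens at the junction $+\infty$, because $\Omega_\varepsilon$ keeps ordinary and relative indices in their own regions and only the identity lands on $+\infty$.

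Applying $H_k$ degreewise converts this into an $\varepsilon$-interleaving (with respect to $\Omega_\varepsilon$) of the degree-$k$ parts of the extended persistence modules $\mathbb{V}_f$ and $\mathbb{V}_g$. These modules are finite direct sums of interval modules by Definition~\ref{def:Morse-type} and the discussion after it, so the induced matching theorem yields an $\varepsilon$-matching of their barcodes, i.e.\ a partial matching between $\Dg(f)$ and $\Dg(g)$. Since $\Omega_\varepsilon$ preserves each of the three regions (ordinary, the single point $+\infty$, relative), a matched pair of intervals is born in the same region and dies in the same region; hence ordinary points are matched only to ordinary points, relative only to relative, extended only to extended, and the matching respects homological degree since we argued one degree at a time, so it is admissible in the sense of Definition~\ref{def:bottleneck}. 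Finally one reads off the cost: matched ordinary (resp.\ relative) points have both coordinates within $\varepsilon$, hence $\ell^\infty$-distance $\le\varepsilon$; an extended interval has infinite length inside $\Rext$, so for finite $\varepsilon$ it cannot be sent to the diagonal but only to another extended interval whose birth and (reversed) death lie within $\varepsilon$, again $\ell^\infty$-distance $\le\varepsilon$; and an unmatched ordinary or relative point $p$ is charged exactly $d_\infty(p,\Delta)$. Therefore $\mathrm{cost}\le\varepsilon$, which gives $\distb(\Dg(f),\Dg(g))\le\varepsilon=\|f-g\|_\infty$.

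The delicate point is the bookkeeping around $+\infty$: one must verify that the interleaving maps on the ordinary and on the relative parts fit together into genuine morphisms over all of $\Rext$, and one must appeal to a form of algebraic stability valid for modules indexed by $\Rext\cong\R$ with the nonstandard translation family $\Omega_\varepsilon$ (which fixes $+\infty$ rather than shifting through it) --- equivalently, one must check that the unbounded extended intervals are forced to be matched among themselves. An alternative avoiding $\Omega_\varepsilon$ altogether is the original interpolation argument: reduce by the triangle inequality to functions within an arbitrarily small sup-distance, linearly interpolate $f_t=(1-t)f+tg$, and follow the births and deaths of the homological features of $f_t$ along the path; with interval decomposability already in hand, though, the interleaving route is shorter.
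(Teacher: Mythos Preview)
The paper does not prove this theorem; it is quoted as a background result with a citation to~\cite{Cohen09} and no argument is supplied. So there is nothing in the paper to compare against directly.

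Your interleaving-plus-induced-matching argument is the modern route and the skeleton is correct: the inclusions of sublevel and superlevel sets are exactly as you write, the extended persistence modules are interval-decomposable by the Morse-type hypothesis, and the type- and degree-preservation of the resulting matching follows for the reasons you give. Two points deserve tightening. First, you rightly flag the only genuine subtlety: $\Omega_\varepsilon$ is not the standard additive shift on $\R_{\mathrm{Ext}}\cong\R$ (it fixes $+\infty$), so you cannot invoke the Bauer--Lesnick theorem verbatim; you must either appeal to a generalized algebraic stability statement for interleavings with respect to a super-additive translation family (e.g.\ the Bubenik--de Silva--Scott framework), or else split into the three subdiagrams and run ordinary algebraic stability on the ordinary and relative parts separately while handling the extended intervals by the ``infinite length forces a match'' observation you already made. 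Either works, but one of them has to be carried out. Second, your cost accounting stops one step short: for an unmatched ordinary or relative point you note it is charged $d_\infty(p,\Delta)$, but you still need to say that the induced-matching theorem bounds the length of any unmatched bar by $2\varepsilon$, hence $d_\infty(p,\Delta)\le\varepsilon$.

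For comparison, the original argument in~\cite{Cohen09} follows the interpolation strategy you mention at the end (linearly interpolate $f_t=(1-t)f+tg$, control each small step, and assemble via the triangle inequality), predating the algebraic-stability machinery; your approach is shorter once that machinery is in hand, at the cost of the bookkeeping around $+\infty$.
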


\subsection{Reeb Graphs}
\label{sec:ReebGraphDef}

\begin{definition}
Given a topological space $X$ and a continuous function 
$f:X\rightarrow\R$, we define the equivalence relation $\sim_f$ between points of $X$ by
$x\sim_f y$
if and only if
$f(x)=f(y)$ 
and $x,y$ belong to the same cc of $f^{-1}(f(x))=f^{-1}(f(y))$.
The \emph{Reeb graph} $\Reeb_f(X)$ is the quotient space $X/\sim_f$.
As $f$ is constant on equivalence classes, there is a well-defined induced
 map $\tilde f:\Reeb_f(X)\rightarrow\R$.
\end{definition}

\subparagraph*{Connection to extended persistence.}
If $f$ is a function of Morse type, then
the pair $(X,f)$ is an {\em $\R$-constructible space} in the sense
of~\cite{deSilva16}.  This ensures that the Reeb graph is a
multigraph, whose nodes are in one-to-one correspondence with the cc
of the critical level sets of $f$.  
In that case, there is a nice interpretation of $\Dg(\tilde f)$ in terms of the
structure of $\Reeb_f(X)$. We refer the reader to~\cite{Bauer14, Cohen09} and
the references therein for a full description as well as formal
definitions and statements.  
Orienting the Reeb graph vertically so ${\tilde f}$ is the height function,
we can see each cc of the graph as a trunk with
multiple branches (some oriented upwards, others oriented downwards)
and holes.  Then, one has the following correspondences, where the
{\em vertical span} of a feature is the span of its image by~$\tilde
f$:
\begin{itemize}
\item The vertical spans of the trunks
are given by the points in $\Ext_0^+(\tilde f)$;
\item The vertical spans of the downward branches are given by the
  points in $\Ord_0(\tilde f)$;
\item The vertical spans of the upward branches are given by the
  points in $\Rel_1(\tilde f)$;
\item The vertical spans of the holes are given by the points in
  $\Ext_1^-(\tilde f)$.
\end{itemize}
The rest of the diagram of~$\tilde f$ is empty.  These correspondences
provide a dictionary to read off the structure of the Reeb graph from
the persistence diagram of the quotient map~$\tilde f$. Note that it is a
bag-of-features type of descriptor, taking an inventory of all the
features together with their vertical spans, but leaving aside the actual
layout of the features. As a consequence, it is an incomplete
descriptor: two Reeb graphs with the same persistence diagram may not
be isomorphic. See the two Reeb graphs in Figure~\ref{fig:cetrue} for instance.

\subparagraph*{Notation.}  Throughout the paper, we consider Reeb graphs coming from Morse-type functions, equipped with their induced maps. We denote by $\RS$ the space of such graphs. 
In the following, we have $\Reeb_f,\Reeb_g \in\RS$, with induced maps
$f:\Reeb_f\rightarrow\R$ with critical values $\{a_1,...,a_n\}$, and
$g:\Reeb_g\rightarrow\R$ with critical values $\{b_1,...,b_m\}$.  Note
that we write $f,g$ instead of $\tilde f,\tilde g$ for convenience.
We also assume without loss of generality (w.l.o.g.) that $\Reeb_f$
and $\Reeb_g$ are connected. 
If they are not connected, then our analysis
can be applied component-wise.  

\subsection{Distances for Reeb graphs}

\begin{definition}
The {\em bottleneck distance} between $\Reeb_f$ and $\Reeb_g$ is:
\begin{equation}\label{eq:dbott}
\distb(\Reeb_f,\Reeb_g):=\distb(\Dg(f),\Dg(g)).
\end{equation}
\end{definition}

\begin{definition}
The {\em functional distortion distance} between $\Reeb_f$ and $\Reeb_g$ is:
\begin{equation}\label{eq:dfd}
\distfd(\Reeb_f,\Reeb_g):=\underset{\phi,\psi}{\rm inf}\ {\rm max}\left\{\frac{1}{2}D(\phi,\psi),\ \|f-g\circ\phi\|_\infty,\ \|f\circ\psi-g\|_\infty\right\},
\end{equation}
where: 
\begin{itemize}
\item $\phi:\Reeb_f\rightarrow\Reeb_g$ and $\psi:\Reeb_g\rightarrow\Reeb_f$ are continuous maps, 
\item $D(\phi,\psi)=\ {\rm sup}\ \left\{|d_f(x,x')-d_g(y,y')|\ {\rm such\ that}\ (x,y),(x',y')\in C(\phi,\psi)\right\},$ where:
\begin{itemize}
\item $C(\phi,\psi)=\{(x,\phi(x))\ |\ x\in\Reeb_f\}\cup\{(\psi(y),y)\ |\ y\in\Reeb_g\}$, 
\item $d_f(x,x')=\underset{\pi:x\rightarrow x'}{\rm min}\ \left\{\underset{t\in[0,1]}{\max}\ f\circ\pi(t)-\underset{t\in[0,1]}{\min}\ f\circ\pi(t)\right\}$, 
where $\pi:[0,1]\rightarrow\Reeb_f$ is a continuous path from $x$ to $x'$ in $\Reeb_f$ ($\pi(0)=x$ and $\pi(1)=x'$),
\item $d_g(y,y')=\underset{\pi:y\rightarrow y'}{\rm min}\ \left\{\underset{t\in[0,1]}{\max}\ g\circ\pi(t)-\underset{t\in[0,1]}{\min}\ g\circ\pi(t)\right\}$, 
where $\pi:[0,1]\rightarrow\Reeb_g$ is a continuous path from $y$ to $y'$ in $\Reeb_g$ ($\pi(0)=y$ and $\pi(1)=y'$).
\end{itemize}
\end{itemize}
\end{definition}

Bauer et al.~\cite{Bauer14} related these distances as follows:
\begin{thm}\label{th:lowerboundw}
The following inequality holds: 
$\distb(\Reeb_f,\Reeb_g)
\leq 3\,\distfd(\Reeb_f,\Reeb_g).$
\end{thm}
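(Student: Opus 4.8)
The plan is to establish the inequality $\distb(\Reeb_f,\Reeb_g)\leq 3\,\distfd(\Reeb_f,\Reeb_g)$ by a two-step argument that factors through the functional distortion distance's relationship with extended persistence. First I would fix $\e>\distfd(\Reeb_f,\Reeb_g)$ and choose continuous maps $\phi:\Reeb_f\to\Reeb_g$ and $\psi:\Reeb_g\to\Reeb_f$ realizing the distortion to within $\e$, so that $\tfrac12 D(\phi,\psi)\leq\e$, $\|f-g\circ\phi\|_\infty\leq\e$, and $\|f\circ\psi-g\|_\infty\leq\e$. The key structural fact I would invoke is that extended persistence diagrams are \emph{functorial} and \emph{stable} with respect to the kind of maps appearing in the definition of $\distfd$: a continuous map $\phi$ that moves function values by at most $\e$ induces a map of extended persistence modules that is an $\e$-morphism (shifts the filtration index by $\e$), and when composed with the map induced by $\psi$ one gets something close to the identity on each module, controlled by $D(\phi,\psi)$. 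This is precisely the content needed to produce an interleaving between the extended persistence modules of $f$ and $g$.

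The heart of the argument is to convert the pair $(\phi,\psi)$ together with the bound on $D(\phi,\psi)$ into an explicit interleaving of the sublevel-set (ordinary) and superlevel-set (relative/extended) filtrations at a suitable scale. Concretely, one shows that $\phi$ sends the sublevel set $\Reeb_f^{(-\infty,\alpha]}$ into $\Reeb_g^{(-\infty,\alpha+\e]}$ (because $g\circ\phi\leq f+\e$), and symmetrically for $\psi$; and that the composites $\psi\circ\phi$ and $\phi\circ\psi$ are, up to the filtration shift dictated by $D(\phi,\psi)\leq 2\e$, homotopic to the canonical inclusions into the $3\e$-enlarged sublevel sets. The bound $D(\phi,\psi)\leq 2\e$ is what is needed here: it guarantees that if $x$ and $\psi(\phi(x))$ lie in a common sublevel set only after enlarging by $d_f$-distance at most $2\e$, so that together with the $\e$ coming from the function-value discrepancies one lands at scale $3\e$. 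Running the same reasoning on the pairs-of-spaces $(X,\Reeb^{[\tilde\alpha,+\infty)})$ handles the relative part, and hence the whole extended filtration, since the extended filtration is just the concatenation of the two. This yields a $3\e$-interleaving of the extended persistence modules $\mathbb{V}_f$ and $\mathbb{V}_g$, type by type and dimension by dimension.

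Finally I would invoke the algebraic stability theorem for persistence modules (the isometry theorem): a $3\e$-interleaving between two pointwise finite-dimensional persistence modules implies that their barcodes are at bottleneck distance at most $3\e$. Applying this separately to each homological dimension and to each of the ordinary, relative, and extended parts of $\Dg(f)$ and $\Dg(g)$ — which is legitimate because the interleaving respects the splitting of $\Rext$ into its ordinary, $+\infty$, and relative pieces and respects homological degree — gives $\distb(\Dg(f),\Dg(g))\leq 3\e$. Letting $\e\downarrow\distfd(\Reeb_f,\Reeb_g)$ completes the proof. The main obstacle I anticipate is the bookkeeping in the second step: verifying that the maps $\phi,\psi$ genuinely induce filtered maps with the claimed shifts requires relating the intrinsic path-metrics $d_f,d_g$ to the sublevel-set filtration (an element at $d_f$-distance $\leq t$ from a sublevel set at level $\alpha$ sits in the sublevel set at level $\alpha+t$ but not conversely, so one must be careful about which direction of inclusion is used and ensure the homotopies commute with the relevant inclusions), and doing this uniformly for the ordinary and relative parts so that the resulting interleaving is genuinely of extended persistence modules rather than of the two halves separately.
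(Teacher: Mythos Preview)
The paper does not contain its own proof of this theorem: it is stated as a result of Bauer, Ge, and Wang and attributed to~\cite{Bauer14}, with no argument given. So there is nothing in the present paper to compare your proposal against.

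That said, your sketch is a plausible high-level strategy, but it is not the route taken in~\cite{Bauer14}. There the authors do not build an interleaving of extended persistence modules directly from $\phi,\psi$; instead they work with the level-set zigzag persistence of the Reeb graphs, prove a stability bound for the zigzag barcodes under $\distfd$ by carefully tracking homology classes across interlevel sets, and then translate the result to extended persistence via the known dictionary between level-set zigzag and extended diagrams. Your approach is closer in spirit to the later interleaving-based arguments (cf.~\cite{Bauer15, deSilva16}).

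The main technical worry in your outline---which you correctly flag---is the passage from the bound $d_f(x,\psi\phi(x))\leq 2\e$ to a \emph{continuous} homotopy between $\psi\circ\phi$ and the inclusion inside the appropriate sublevel set. Pointwise, the geodesic path witnessing $d_f(x,\psi\phi(x))$ stays in $f^{-1}((-\infty,\alpha+2\e])$ when $f(x)\leq\alpha$, but these paths need not vary continuously in $x$, so one does not immediately obtain a homotopy of spaces. For Reeb graphs this can be repaired using their $1$-dimensionality and the cosheaf/interleaving machinery, but it is genuinely the delicate step; the references above handle it by working with (pre)cosheaves or zigzags rather than with naive sublevel-set homotopies. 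If you intend to write out a self-contained proof along your lines, that is the place where real work is required.
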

This result can be improved using the end of Section~3.4 of~\cite{Bjerkevik16}, then noting that 
level set diagrams and extended diagrams are essentially the same~\cite{Carlsson09b}, and finally Lemma~9 of~\cite{Bauer15}:
\begin{thm}\label{th:lowerbound}
The following inequality holds: 
$\distb(\Reeb_f,\Reeb_g)
\leq 2\,\distfd(\Reeb_f,\Reeb_g).$
\end{thm}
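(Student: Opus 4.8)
The strategy is to route the comparison through \emph{level-set zigzag persistence}, where the passage from an interleaving-type bound to a bottleneck matching of barcodes is cheaper than through extended persistence. To see where the gain comes from, recall one way the factor $3$ of Theorem~\ref{th:lowerboundw} arises. Fix $\e>\distfd(\Reeb_f,\Reeb_g)$ and an $\e$-optimal pair of continuous maps $(\phi,\psi)$, so that $\|f-g\circ\phi\|_\infty<\e$, $\|f\circ\psi-g\|_\infty<\e$ and $D(\phi,\psi)<2\e$. For every open interval $I\subset\R$ one then has $\phi(\Reeb_f^{I})\subseteq\Reeb_g^{I'}$ and $\psi(\Reeb_g^{I})\subseteq\Reeb_f^{I'}$, where $I'$ is the $\e$-thickening of $I$, and the distortion bound forces the two composites to respect the merging structure of the level sets; passing to homology, this produces an interleaving at scale $\e$ of the level-set zigzag persistence modules $\mathbb{V}^{\mathrm{ls}}_f$ and $\mathbb{V}^{\mathrm{ls}}_g$. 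Feeding this into the generic algebraic-stability theorem for such modules and then translating back to extended persistence via the dictionary of Carlsson--de Silva--Morozov costs a factor $3$ overall.

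Two refinements remove the slack. First, $\mathbb{V}^{\mathrm{ls}}_f$ and $\mathbb{V}^{\mathrm{ls}}_g$ are \emph{block-decomposable}: their indecomposable summands are supported on the four standard block shapes, which under the dictionary correspond to the four point types $\Ord$, $\Rel$, $\Ext^+$, $\Ext^-$ of $\Dg(f)$. For such modules, the computation at the end of Section~3.4 of~\cite{Bjerkevik16} converts an $\e$-interleaving into a bottleneck matching of cost at most $2\e$, strictly better than the generic constant. Second, the Carlsson--de Silva--Morozov correspondence~\cite{Carlsson09b} is not merely a bijection between the level-set barcode and $\Dg(f)$: it acts on each interval by a fixed relabeling of its endpoints determined by its block type, preserves both the type and the homological dimension, and sends diagonal points to diagonal points, hence is an \emph{isometry} for the bottleneck distance of Definition~\ref{def:bottleneck}. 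Finally, Lemma~9 of~\cite{Bauer15} is what lets one pass from ``$\e$-optimal pair $(\phi,\psi)$'' to ``$\e$-interleaving of the level-set modules'' without losing a constant. Combining these,
\[
\distb(\Reeb_f,\Reeb_g)=\distb\!\left(\mathbb{V}^{\mathrm{ls}}_f,\mathbb{V}^{\mathrm{ls}}_g\right)\le 2\e,
\]
and letting $\e\downarrow\distfd(\Reeb_f,\Reeb_g)$ gives the claim.

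The main obstacle is purely the bookkeeping of constants along this chain: one must check that the interleaving of level-set modules extracted from $(\phi,\psi)$ genuinely sits at scale $\e$ rather than $2\e$ (no hidden doubling in the thickening step), that Bjerkevik's block-decomposable estimate keeps the constant $2$ even across summands of \emph{different} block types, and that the endpoint relabelings built into the dictionary are compatible with the type- and dimension-respecting matchings required by $\distb$. Each step is routine in isolation, but together they are precisely where a spurious extra factor could slip back in; the content of the statement is that, carried out carefully, the constants compose to exactly~$2$.
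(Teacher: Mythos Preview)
Your proposal is correct and follows essentially the same route the paper indicates: the paper does not give its own proof but simply points to the chain Bjerkevik~\cite{Bjerkevik16} (end of Section~3.4) $\to$ Carlsson--de Silva--Morozov~\cite{Carlsson09b} $\to$ Bauer--Munch--Wang~\cite{Bauer15} (Lemma~9), and you have unpacked precisely that chain---extracting an $\e$-interleaving of the level-set zigzag modules from an $\e$-optimal pair $(\phi,\psi)$, applying the block-decomposable algebraic stability with constant~$2$, and transporting the resulting matching to extended diagrams via the isometric dictionary. Your commentary on where a spurious extra factor could appear is apt bookkeeping but does not depart from the paper's intended argument.
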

We emphasize that, even though Theorem~\ref{th:lowerbound} allows us to improve on the constants of our main result---see Theorem~\ref{th:locisom},
the reduction from $3\,\distfd(\Reeb_f,\Reeb_g)$ in Theorem~\ref{th:lowerboundw} to $2\,\distfd(\Reeb_f,\Reeb_g)$  
in Theorem~\ref{th:lowerbound}
is not fundamental for our analysis and proofs.

Since the bottleneck distance is only a pseudo-metric---see Figure~\ref{fig:cetrue}, the inequality 
given by Theorem~\ref{th:lowerbound}
cannot be turned into an equivalence result. However, for any pair of Reeb graphs $\Reeb_f,\Reeb_g$ 
that have the same extended persistence diagram $\Dg(f)=\Dg(g)$, 
and that are at positive functional distortion distance from each other, 
every continuous path in $\distfd$ from $\Reeb_f$ to $\Reeb_g$ will perturb the points of $\Dg(f)$ 
and eventually drive them back to their initial position,
suggesting first that $\distb$ is locally equivalent to $\distfd$---see Theorem~\ref{th:locisom} in Section~\ref{sec:loc}, but also that, 
even though $\distb(\Reeb_f,\Reeb_g)=0$, 
the intrinsic metric $\hdistb(\Reeb_f,\Reeb_g)$ induced by $\distb$ is positive---see Theorem~\ref{th:strongeq} in Section~\ref{sec:induced}.  

\begin{figure}[h]\centering
\includegraphics[width=12cm]{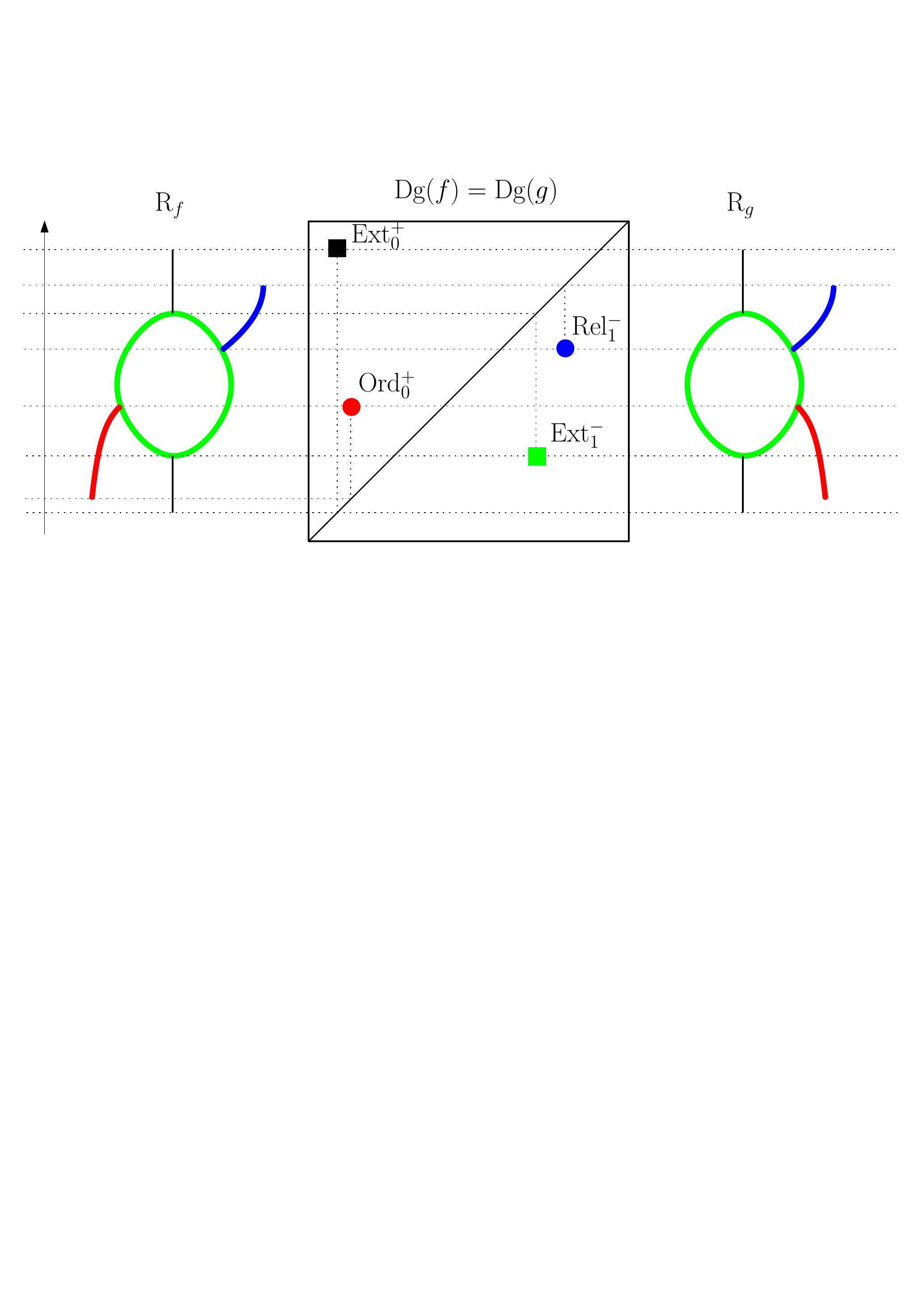}
\caption{\label{fig:cetrue} Example of two different Reeb graphs $\Reeb_f$ and $\Reeb_g$ that have the same extended persistence diagram $\Dg(f)=\Dg(g)$.
These graphs are at bottleneck distance 0 from each other, while their functional distortion distance is positive.}
\end{figure}

%%%%%%%%%%%%%%%%%%%%%%%%%%%%%%%%%%%%%%%%%%%%%%%%%%%%%%%%%%%%%%%%%%%%%%%%%%%%%%%%%%%%%
\section{Local Equivalence}\label{sec:loc}
%%%%%%%%%%%%%%%%%%%%%%%%%%%%%%%%%%%%%%%%%%%%%%%%%%%%%%%%%%%%%%%%%%%%%%%%%%%%%%%%%%%%%

Let $a_f=\min_{1\leq i\leq n}a_{i+1}-a_i >0$
and $a_g=\min_{1\leq j\leq m}b_{j+1}-b_j >0$.
In this section, 
we show the following local equivalence theorem:

\begin{thm}\label{th:locisom}
Let $K\in(0,1/22]$. If $\distfd(\Reeb_f,\Reeb_g)\leq\max\{a_f,a_g\}/(8(1+22K))$, then:
$$K\distfd(\Reeb_f,\Reeb_g)\leq\distb(\Reeb_f,\Reeb_g)\leq 2\distfd(\Reeb_f,\Reeb_g).$$ 
\end{thm}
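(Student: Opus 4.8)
The right-hand inequality $\distb(\Reeb_f,\Reeb_g)\le 2\,\distfd(\Reeb_f,\Reeb_g)$ is exactly Theorem~\ref{th:lowerbound}, so the whole point is the left-hand one, which I would establish by contradiction. Suppose $\distb(\Reeb_f,\Reeb_g)<K\,\distfd(\Reeb_f,\Reeb_g)$, and write $\e:=\distfd(\Reeb_f,\Reeb_g)$, $\delta:=\distb(\Reeb_f,\Reeb_g)$, so $\delta<K\e\le\e/22$. The hypothesis then yields $22\delta+8\e<(8+22K)\,\e\le 8(1+22K)\,\e\le\max\{a_f,a_g\}$, i.e.\ $\e$ and $\delta$ are tiny compared with the smallest gap between consecutive critical values of $f$ or of $g$. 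The plan is to build, from an optimal partial matching between $\Dg(f)$ and $\Dg(g)$, continuous maps $\phi:\Reeb_f\to\Reeb_g$ and $\psi:\Reeb_g\to\Reeb_f$ of functional distortion cost at most $22\delta$; since $22\delta<\e=\distfd(\Reeb_f,\Reeb_g)$, this contradicts the fact that $\distfd(\Reeb_f,\Reeb_g)$ is defined as an infimum, proving the theorem.

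\emph{Step 1: combinatorial rigidity.} Fix near-optimal maps $\phi_0,\psi_0$ realizing functional distortion within an arbitrarily small $\eta$ of $\e$. From $\|f-g\circ\phi_0\|_\infty\le\e+\eta$, $\|f\circ\psi_0-g\|_\infty\le\e+\eta$ and $D(\phi_0,\psi_0)\le 2(\e+\eta)$ one gets that $\phi_0,\psi_0$ shift heights by at most $\e+\eta$ and that $\psi_0\circ\phi_0$, $\phi_0\circ\psi_0$ are $2(\e+\eta)$-close to the identity in the path-range pseudo-metrics $d_f,d_g$. Since consecutive critical values of $f$ (resp.\ $g$) differ by at least $a_f$ (resp.\ $a_g$), and these gaps dwarf $\e$, this forces: (i) a height-matching, up to $O(\e)$, between the critical levels of $f$ and $g$ that carry features; and (ii) compatibility at scale $O(\e)$ of the trunk/branch/hole decompositions of $\Reeb_f$ and $\Reeb_g$ described in Section~\ref{sec:ReebGraphDef}, i.e.\ $\phi_0$ sends each non-negligible feature of $\Reeb_f$ essentially onto the corresponding feature of $\Reeb_g$. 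This is precisely where the hypothesis comparing $\e$ to $\max\{a_f,a_g\}$ is used.

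\emph{Steps 2 and 3: reading the diagram and re-laying out.} Once the combinatorial types agree, the dictionary of Section~\ref{sec:ReebGraphDef} shows that $\Dg(f)$ records the heights of every extremal and attachment point of each feature: a downward branch contributes (its minimum, the height at which it merges into the rest of the graph) in $\Ord_0$, an upward branch the symmetric pair in $\Rel_1$, a hole the heights of the critical levels that create and destroy it in $\Ext_1^-$, and the trunk its span in $\Ext_0^+$; and likewise for $g$. Fix a partial matching $\Gamma$ of cost $\delta$: it pairs the two trunks with span discrepancy $\le\delta$, pairs matched $\Ord_0/\Rel_1/\Ext_1^-$ points into features whose defining heights agree within $\delta$, and leaves unmatched only features of vertical span $\le 2\delta$. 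Combining this with Step~1---a matching of cost $\delta\ll a_f,a_g$ cannot pair features sitting at combinatorially distant places---one may take $\Gamma$ to respect the combinatorial correspondence, so that corresponding features of $\Reeb_f$ and $\Reeb_g$ agree on all their critical heights within $\delta$, up to the $\le 2\delta$-small features one side may carry without the other. One then morphs $\Reeb_f$ into $\Reeb_g$ through a bounded-length sequence of elementary moves of controlled cost: a $\Shift$ of each attachment or extremal height to its $\Reeb_g$-value (cost $\le\delta$), together with $\Merge$ and $\Split$ moves that erase or insert the $\le 2\delta$-small features (cost $\le 2\delta$). Since every height is displaced by at most $22\delta$ and $22\delta+8\e<\max\{a_f,a_g\}$, each intermediate space is a genuine Reeb graph of the same combinatorial type, and composing the maps carried by the successive moves yields $\phi,\psi$ with $\max\{\tfrac12 D(\phi,\psi),\ \|f-g\circ\phi\|_\infty,\ \|f\circ\psi-g\|_\infty\}\le 22\delta$, the desired contradiction.

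\emph{Main obstacle.} The crux lies in Step~1 and in the coherence claim used in Step~2: one must turn the analytic fact ``$\distfd$ is small'' into the rigid combinatorial statement that $\Reeb_f$ and $\Reeb_g$ share a trunk/branch/hole structure, and then show that the combinatorially uninformed optimal bottleneck matching is nonetheless forced, in this regime, to pair features coherently with that structure. The difficulty is that $\distfd$ is governed by the path-range pseudo-metrics $d_f,d_g$ whereas $\distb$ only sees birth and death coordinates, so one must simultaneously rule out low-distortion correspondences that cross branches or split/merge a feature, and cost-$\delta$ matchings that pair coincidentally close but combinatorially distant features---these being exactly the moves that are cheap for $\distb$ but costly for $\distfd$. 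The remaining, more mechanical difficulty is the accounting in Step~3 that pins down the constant $22$ and hence the precise form of the hypothesis: one must bound the number and cost of the elementary moves, including the treatment of the sub-$2\delta$ features, while certifying that every intermediate space stays a bona fide Reeb graph---which is why $\e$, equivalently $\delta$, must be quantitatively smaller than $\max\{a_f,a_g\}$.
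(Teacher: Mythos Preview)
Your contradiction setup matches the paper's, but the route you take from there is quite different, and the difference matters. The paper does \emph{not} attempt to build maps of distortion $\leq 22\delta$ directly between $\Reeb_f$ and $\Reeb_g$. Instead it applies to $\Reeb_g$ a composite operator $F_{K\e}=\Merge_{9K\e}\circ S_{2K\e}$, where $S_{2K\e}$ is the simplification operator of~\cite{Bauer14} (it kills all diagram points within $K\e$ of the diagonal at $\distfd$-cost $<4K\e$) and $\Merge_{9K\e}$ snaps every remaining critical value onto the nearest $a_i$ (at $\distfd$-cost $\leq 18K\e$, via explicit maps built arc by arc). Writing $\Reeb_{g'}=F_{K\e}(\Reeb_g)$, one gets $\distfd(\Reeb_g,\Reeb_{g'})<22K\e$; the constant $22=4+18$ falls out of these two operator bounds. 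The second half shows $\distfd(\Reeb_f,\Reeb_{g'})=0$: since $\Crit(g')\subseteq\Crit(f)$ by construction, both graphs are unions of arcs over each $(a_i,a_{i+1})$, and the triangle-inequality bound $\distfd(\Reeb_f,\Reeb_{g'})<(1+22K)\e<a_f/8$ plus pigeonhole forces equal arc counts and identical attaching maps. The contradiction is then $\e\leq 0+22K\e<\e$.

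The gap in your proposal is exactly the one you flag as the ``main obstacle,'' and you do not close it. You need the cost-$\delta$ bottleneck matching to respect the combinatorial correspondence produced in Step~1, but nothing prevents $\Gamma$ from swapping two features of $\Reeb_f$ that happen to sit at nearby (or identical) diagram coordinates yet are attached at different places in the graph---the diagram simply does not see layout. The paper sidesteps this entirely: it never reads the matching $\Gamma$ beyond the crude inclusion $\Dg(g)\subseteq\bigcup_{\tau\in\Dg(f)}B_\infty(\tau,K\e)\cup\Delta^{K\e}$, and the structural rigidity (arc counts, attaching maps) is extracted from the \emph{functional-distortion} bound on the pair $(\Reeb_f,\Reeb_{g'})$, where the critical values are already aligned. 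Your Step~3 is also only a sketch: the $\Shift$ and $\Split$ moves are not defined, their $\distfd$-costs are not established, and the claim that a bounded number of them suffices with total displacement $\leq 22\delta$ is asserted rather than shown. The paper's use of off-the-shelf operators with pre-proven bounds is what makes the accounting clean.
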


Note that the notion of locality used here is slightly different from
the usual one.  On the one hand, the equivalence does not hold for any
arbitrary pair of Reeb graphs inside a neighborhood of some fixed Reeb
graph, but rather for any pair involving the fixed graph. 
On the other hand, the constants in the equivalence are independent of the pair of Reeb graphs considered. 
The upper bound on $\distb(\Reeb_f,\Reeb_g)$ is given by Theorem~\ref{th:lowerbound}
and always holds.
The aim of this section is to prove the lower bound.

{\bf Convention:}
We assume w.l.o.g. that $\max\{a_f,a_g\}=a_f$, 
and we let $\e=\distfd(\Reeb_f,\Reeb_g)$.

\subsection{Proof of Theorem~\ref{th:locisom}}
\label{sec:geomgraphs}

Let $K\in(0,1/22]$.
The proof proceeds by contradiction. Assuming $\distb(\Reeb_f, \Reeb_g) < K\e$, 
where $\e=\distfd(\Reeb_f, \Reeb_g)< a_f/(8(1+22K))$, 
we  progressively transform $\Reeb_g$ into some other Reeb graph $\Reeb_{g'}$ 
(Definition~\ref{def:simpl}) that satisfies both $\distfd(\Reeb_g, \Reeb_{g'})< \e$ (Proposition~\ref{prop:fdb}) and   
$\distfd(\Reeb_f, \Reeb_{g'})=0$ (Proposition~\ref{prop:iso2}). The contradiction follows from the triangle inequality. 

\subsubsection{Graph Transformation}\label{sec:smoothope}

The graph transformation is defined as the composition of the {\em
  simplification operator} from~\cite{Bauer14} and the {\em Merge
  operator}\footnote{Strictly speaking, the output of our Merge
   is the Reeb graph of the output of 
the Merge from~\cite{Carriere16}.}
from~\cite{Carriere16}.  We refer the reader to these articles for the
precise definitions. Below we merely recall their main properties.
Given a set $S\subseteq X$ and a scalar $\alpha >0$, we recall that
$S^\alpha=\{x\in X\ |\ d(x,S)\leq\alpha\}$ denotes the $\alpha$-offset
of $S$.

\begin{lemma}[Theorem~7.3 and following remark in~\cite{Bauer14}] \label{lem:stabsmooth}
Given $\alpha>0$, the simplification operator $S_\alpha:\RS \to \RS$ takes any Reeb graph 
$\Reeb_h$ to $\Reeb_{h'}=S_\alpha(\Reeb_h)$ such that $\Dg(h')\cap \Delta^{\alpha/2}=\emptyset$  
and $\distb(\Reeb_h,\Reeb_{h'})\leq 2\,\distfd(\Reeb_h,\Reeb_{h'}) \leq 4\alpha$.
\end{lemma}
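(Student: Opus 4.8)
The statement is a repackaging of Theorem~7.3 of~\cite{Bauer14} together with the remark following it, so the plan is not to reprove it from scratch but to recall the construction of $S_\alpha$ and explain why each of the stated properties holds, deferring the technical verifications to~\cite{Bauer14}.

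Recall from~\cite{Bauer14} that $S_\alpha$ is designed to kill, feature by feature, every structural feature of $\Reeb_h$ whose vertical span (equivalently, by the dictionary of Section~\ref{sec:ReebGraphDef}, whose persistence in $\Dg(h)$) is at most $\alpha$: a downward or upward branch that is too short is contracted onto the node where it attaches, and a hole that is too short is cut open, while the trunks are left untouched. Because the extended module of $h$ decomposes as a finite direct sum of interval modules, these elementary simplifications act on essentially disjoint parts of the graph and can be performed independently, and Theorem~7.3 of~\cite{Bauer14} guarantees that after a single application of $S_\alpha$ no feature of span $\le\alpha$ survives. Hence every point $(x,y)$ of $\Dg(h')$ satisfies $|x-y|>\alpha$, which is exactly $\Dg(h')\cap\Delta^{\alpha/2}=\emptyset$ since $d_\infty((x,y),\Delta)=|x-y|/2$.

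For the metric estimate, I would observe that each elementary simplification --- contracting a branch of span $\ell\le\alpha$, or cutting a hole of span $\ell\le\alpha$ --- can be realized by an explicit pair of continuous maps $\phi:\Reeb_h\to\Reeb_{h'}$, $\psi:\Reeb_{h'}\to\Reeb_h$ whose functional distortion cost is $O(\ell)=O(\alpha)$ (each endpoint of a simplified feature is displaced by at most $\ell$, and the associated path distances change by at most $\ell$). Tracking the worst case over all simplified features, Theorem~7.3 of~\cite{Bauer14} yields $\distfd(\Reeb_h,\Reeb_{h'})\le 2\alpha$. Feeding this into Theorem~\ref{th:lowerbound} applied to the pair $(\Reeb_h,\Reeb_{h'})$ gives $\distb(\Reeb_h,\Reeb_{h'})\le 2\,\distfd(\Reeb_h,\Reeb_{h'})\le 4\alpha$, which is the full chain of inequalities.

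The genuinely delicate points --- both settled in~\cite{Bauer14} --- are $(i)$ that simplifying all the short features simultaneously is well-defined and does not spawn new features of span $\le\alpha$, so that the ``clean diagram'' conclusion really holds after a single pass of $S_\alpha$, and $(ii)$ that the functional distortion cost does not accumulate over the possibly many simplified features but stays bounded by $2\alpha$. Since our later arguments use only the stated consequences of $S_\alpha$ and never its internals, the cleanest presentation is to state the lemma as above with an explicit pointer to Theorem~7.3 of~\cite{Bauer14} and the remark following it.
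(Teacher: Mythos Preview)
Your proposal is correct and matches the paper's approach exactly: the paper does not prove this lemma at all but simply imports it from~\cite{Bauer14} via the citation in the lemma header, treating $S_\alpha$ as a black box whose only relevant outputs are the two stated properties. Your added sketch of the construction and the derivation of $\distb\leq 2\,\distfd\leq 4\alpha$ via Theorem~\ref{th:lowerbound} is accurate and a welcome elaboration, but the paper itself offers nothing beyond the pointer.
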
 

\begin{lemma}[Theorem~2.5 and Lemma~4.3 in~\cite{Carriere16}] \label{lem:merge}
Given $a\leq b$, the merge operator $\Merge_{a, b}: \RS\to\RS$ takes any Reeb graph 
$\Reeb_h$ to $\Reeb_{h'}=\Merge_{a,b}(\Reeb_h)$ such that 
$\Dg(h')$ is obtained from $\Dg(h)$ through the following snapping principle (see Figure~\ref{fig:merge} for an illustration):
\[
(x,y)\in\Dg(h) \mapsto (x',y')\in\Dg(h')\text{ where } x' =
\left\{ \begin{array}{ll} x\text{ if }x\notin[a,b] \\[0.5ex]
\frac{a+b}{2}\text{ otherwise } \end{array} \right.
\text{ and similarly for }y'.
\]
\end{lemma}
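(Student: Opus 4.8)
The plan is to read off the effect of $\Merge_{a,b}$ on $\Dg(h)$ directly from its effect on the structure of $\Reeb_h$, via the dictionary between extended persistence points and Reeb-graph features (trunks, downward branches, upward branches, holes) recalled in Section~\ref{sec:ReebGraphDef}. The first step is to pin down, from the construction in~\cite{Carriere16}, the one structural property that makes everything go through: $\Merge_{a,b}$ modifies $\Reeb_h$ only inside the slab $h^{-1}([a,b])$, where it collapses each connected component to a single point placed at height $(a+b)/2$, and it leaves the sub-Reeb-graphs $h^{-1}((-\infty,a])$ and $h^{-1}([b,+\infty))$ together with the values of $h$ on them unchanged. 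Everything below is then bookkeeping.

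Because the snapping map acts coordinatewise, it suffices to track a single endpoint of the vertical span of a feature --- a local minimum, a local maximum, a saddle, or the top/bottom of a hole. Such an endpoint is a critical value of $h$. If it lies outside $[a,b]$, the corresponding critical behaviour happens outside the collapsed slab and is untouched, so the endpoint is fixed; if it lies in $[a,b]$, that critical behaviour happens inside some collapsed component, hence in $\Reeb_{h'}$ it occurs at the single height $(a+b)/2$, so the endpoint moves to $(a+b)/2$ --- exactly the snapping formula. The one degenerate situation is a feature both of whose span endpoints lie in $[a,b]$: then the whole feature sits inside a single slab component and is destroyed by the collapse, consistently with the formula sending both coordinates to $(a+b)/2$, i.e. onto the diagonal (retained only in $\Ext^+$ by the stated convention). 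Along the way one must check (a) that the feature of $\Reeb_{h'}$ produced this way has the same type and homological dimension as the one it came from, and (b) that distinct features of $\Reeb_h$ give distinct features of $\Reeb_{h'}$ and that no new feature is created. Point (b) holds because collapsing a connected subgraph of a graph to a point cannot increase the first Betti number --- it is a homotopy equivalence exactly when the subgraph is a tree, and otherwise it kills precisely the $1$-cycles it carries, which are exactly the holes lying entirely inside the slab --- and because connected components of the slab are collapsed independently of one another.

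The main obstacle is the bookkeeping for the \emph{extended} part of the diagram, i.e. the trunks ($\Ext_0^+$) and holes ($\Ext_1^-$), since these involve relative homology groups $H_*(\Reeb_h, \Reeb_h^{[\tilde\alpha,+\infty)})$ rather than sublevel-set homology alone, so the informal ``critical behaviour'' discussion must be made rigorous there. The clean way is a Mayer--Vietoris / excision argument: cut $\Reeb_h$ and the sublevel/superlevel pairs occurring in its extended filtration along the two levels $a$ and $b$ into a bottom part, a slab part and a top part, and compare the resulting long exact sequences for $h$ with those for $h'$. Since $\Merge_{a,b}$ replaces the slab by a disjoint union of points --- equivalently, homologically, kills exactly the $1$-cycles supported in the slab and leaves the remaining slab homology as that of a disjoint union of contractible pieces --- the connecting homomorphisms, and hence the interval decomposition of the extended module, and hence $\Dg(h')$, change exactly as the coordinatewise snapping prescribes. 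Once the extended points are settled, the ordinary and relative points follow from the standard sublevel-set and superlevel-set persistence arguments applied to the three pieces, and the lemma follows. (This is essentially Theorem~2.5 and Lemma~4.3 of~\cite{Carriere16}, which I would simply cite; the above is how I would reconstruct the proof.)
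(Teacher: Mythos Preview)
The paper does not prove this lemma at all: it is stated as a black box imported from~\cite{Carriere16} (Theorem~2.5 and Lemma~4.3 there), with no argument beyond the citation and the illustrative Figure~\ref{fig:merge}. You recognise this yourself in your final parenthetical, and indeed simply citing is exactly what the paper does.

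Your reconstruction sketch is a plausible outline of how such a result is obtained, and the structural description of $\Merge_{a,b}$ you rely on---collapse each connected component of the slab $h^{-1}([a,b])$ to a point at height $(a+b)/2$, leave the rest untouched---matches the paper's informal picture (note the footnote clarifying that the paper's $\Merge$ is the Reeb graph of the operator from~\cite{Carriere16}, which is precisely what makes the slab components into single points). The dictionary-based feature tracking and the Mayer--Vietoris bookkeeping for the extended part are reasonable strategies, though as written they remain a sketch rather than a proof: you would still need to verify carefully that the interval decomposition of the extended persistence module changes exactly as claimed, in particular that no new intervals are created and that multiplicities are preserved, which is the substance of the cited results. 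For the purposes of this paper, however, none of that is required---the citation suffices.
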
 

\begin{figure}[htb]\centering
\includegraphics[height=5cm]{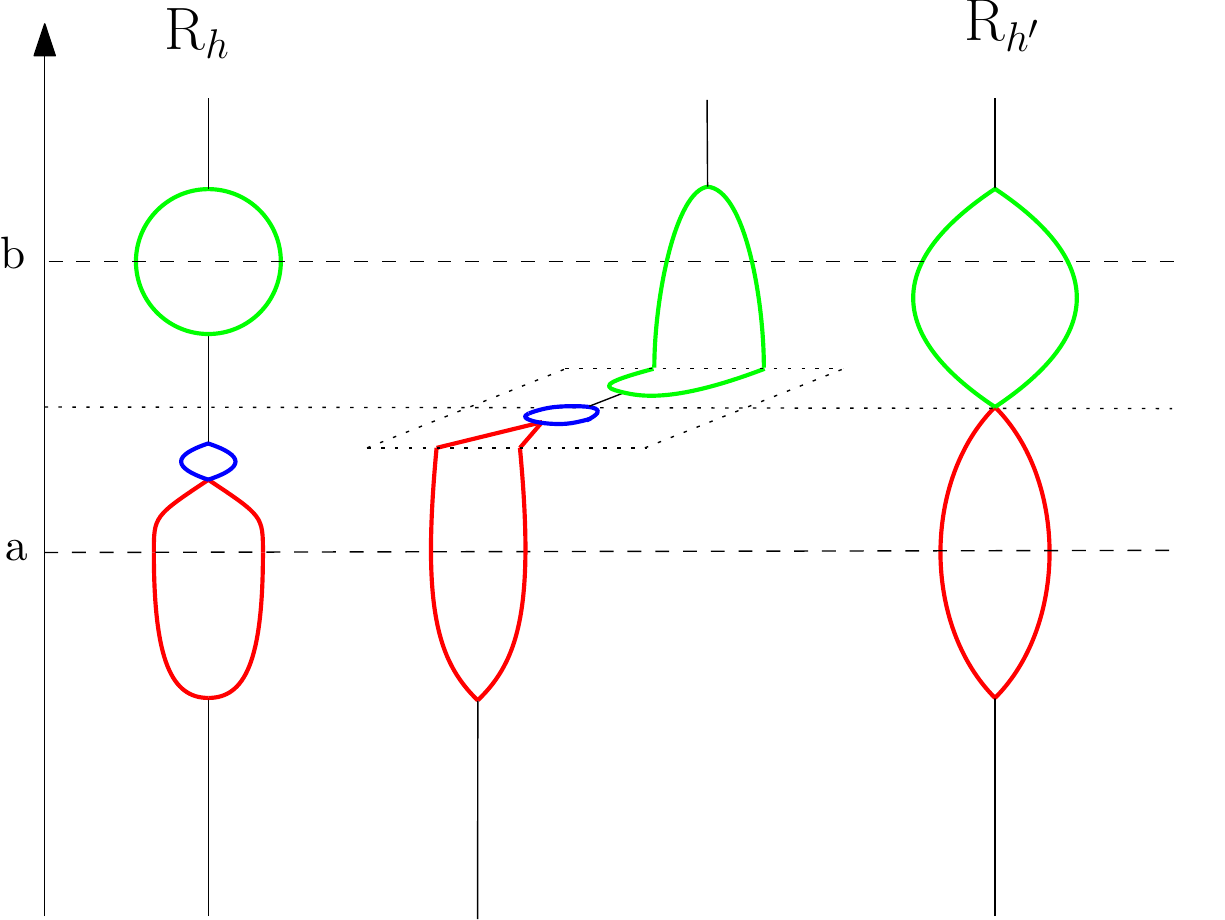}
\hspace{1cm}
\includegraphics[height=5cm]{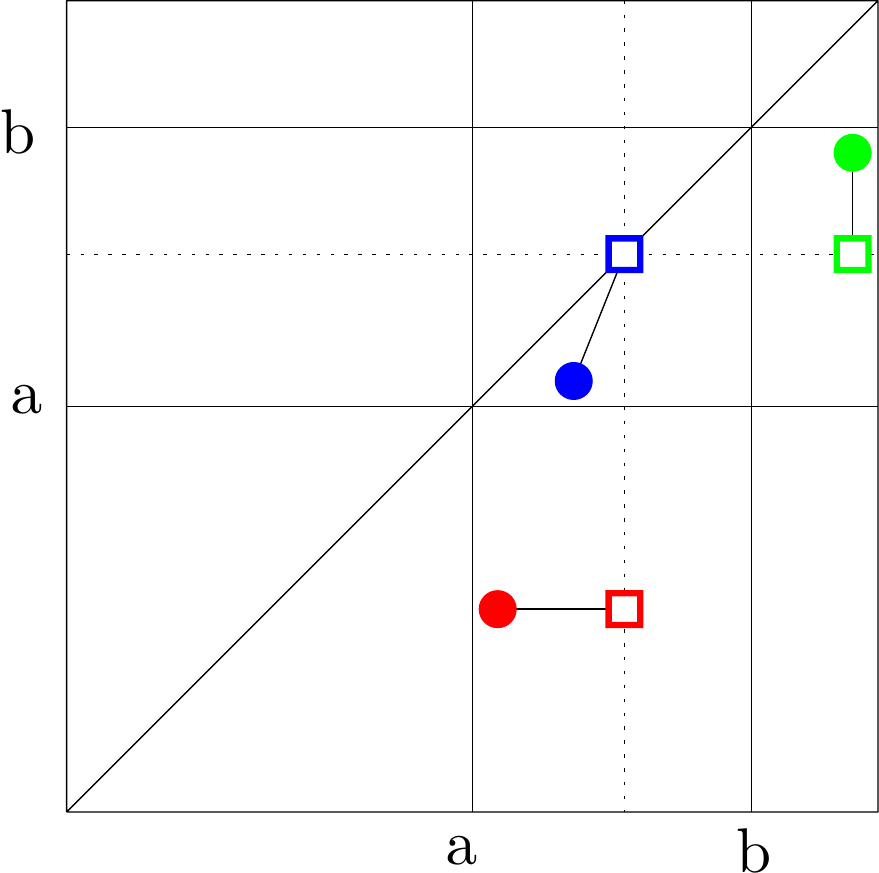}
\caption{\label{fig:merge} 
Left: effect of $\Merge_{a,b}$ on a Reeb graph~$\Reeb_h$.
Right: Effect
on its persistence diagram.}
\end{figure}

\begin{definition}\label{def:simpl}
Let $\Reeb_f$ be a fixed Reeb graph with critical values $\{a_1,
\cdots, a_n\}$. Given $\alpha > 0$, the full transformation
$F_\alpha:\RS\to\RS$ is defined as
$
F_{\alpha}=
\Merge_{9\alpha}
\circ S_{2\alpha},
$
where $\Merge_{9\alpha}=\Merge_{a_n-9\alpha,\:a_n+9\alpha}\circ ... \circ \Merge_{a_1-9\alpha,\:a_1+9\alpha}$.
See Figure~\ref{fig:recap3} for an illustration.
\end{definition}

\begin{figure}[htb]
\centering
\includegraphics[width=11cm]{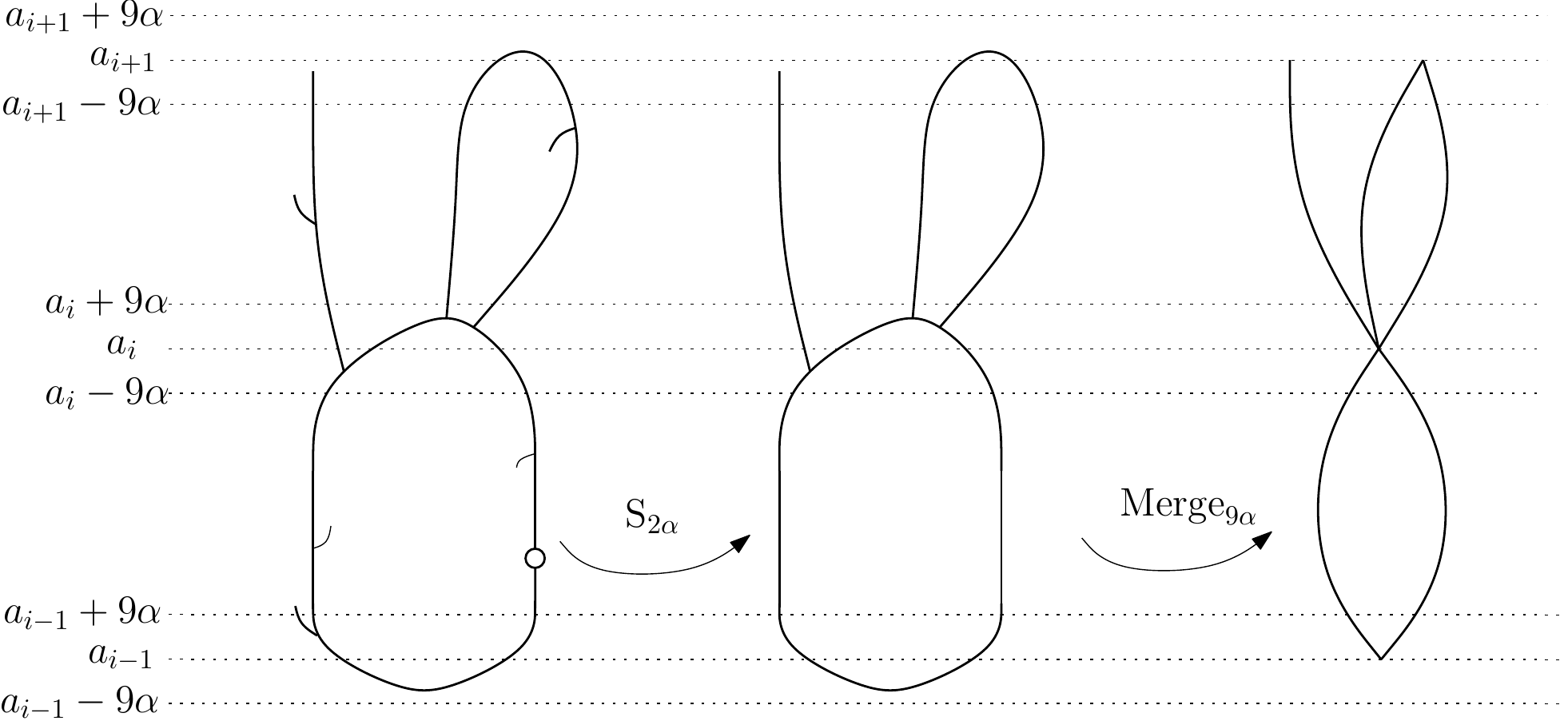}
\caption{Illustration of $F_{\alpha}$.}
\label{fig:recap3}
\end{figure}

\subsubsection{Properties of the transformed graph}

Let $\Reeb_f, \Reeb_g\in\RS$ such that $\distb(\Reeb_f, \Reeb_g) <
K\e$ where $\e=\distfd(\Reeb_f, \Reeb_g)< a_f/(8(1+22K))$.  Letting
$\Reeb_{g'} = F_{K\e}(\Reeb_g)$, we want to show both that
$\distfd(\Reeb_g,\Reeb_{g'})<22K\e<\e$ and $\distfd(\Reeb_f,\Reeb_{g'})=0$,
which will lead to a contradiction as mentioned previously.  

Let $B_\infty(\cdot,\cdot)$ denote balls in the $\ell_\infty$-norm.
\begin{lemma}\label{lem:crit}
Let $\Reeb_h=S_{2K\e}(\Reeb_g)$.
Under the above assumptions, one has 
\begin{equation}\label{eq:crit}
\Dg(h)\subseteq{\bigcup}_{\tau\in\Dg(f)} B_\infty(\tau,9K\e).
\end{equation}
\end{lemma}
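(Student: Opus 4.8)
The plan is to chase the diagram $\Dg(h)$ through the two operators that compose $F_{K\e}$ and use stability to locate every point of $\Dg(h)$ near a point of $\Dg(f)$. First I would recall the hypothesis $\distb(\Reeb_f,\Reeb_g)<K\e$, so that every point of $\Dg(g)$ lies either within $\ell_\infty$-distance $K\e$ of a point of $\Dg(f)$ or within $\ell_\infty$-distance $K\e$ of the diagonal $\Delta$ (this is exactly what a partial matching of cost $<K\e$ gives). Then $\Reeb_h = S_{2K\e}(\Reeb_g)$, so by Lemma~\ref{lem:stabsmooth} (applied with $\alpha = 2K\e$) we have $\distb(\Reeb_g,\Reeb_h)\leq 4\alpha = 8K\e$ and, crucially, $\Dg(h)\cap\Delta^{K\e}=\emptyset$, i.e. the simplification has removed every point within $\ell_\infty$-distance $K\e$ of the diagonal. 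Combining the two bottleneck bounds by the triangle inequality yields $\distb(\Reeb_f,\Reeb_h)<K\e+8K\e=9K\e$.

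From $\distb(\Reeb_f,\Reeb_h)<9K\e$ we get a partial matching $\Gamma$ between $\Dg(f)$ and $\Dg(h)$ of cost $<9K\e$. For each point $\sigma\in\Dg(h)$, either $\sigma$ is matched to some $\tau\in\Dg(f)$, in which case $\|\sigma-\tau\|_\infty<9K\e$ and we are done for that point, or $\sigma$ is unmatched, in which case $d_\infty(\sigma,\Delta)<9K\e$. But the second case is impossible: $\Dg(h)$ contains no point within $\ell_\infty$-distance $K\e$ of $\Delta$ from the previous paragraph, and since $K\le 1/22$ we certainly have $9K\e \le K\e$... no, that is false. I need to be more careful here: the offset-freeness I get from Lemma~\ref{lem:stabsmooth} with $\alpha=2K\e$ is $\Dg(h)\cap\Delta^{\alpha/2}=\Dg(h)\cap\Delta^{K\e}=\emptyset$, which only rules out distance $\le K\e$, not distance $<9K\e$. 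So the clean dichotomy does not immediately close. The fix is to observe that an unmatched point $\sigma$ of $\Dg(h)$ with $d_\infty(\sigma,\Delta)<9K\e$ but $d_\infty(\sigma,\Delta)>K\e$ can instead be handled by noting it must be ``close'' to a point of $\Dg(f)$ anyway—but that is circular.

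\textbf{Where the real argument lives.} The honest route, and the one I expect the paper takes, is to not pass through a single $\distb(\Reeb_f,\Reeb_h)$ bound but to track what each operator does to individual points. Stability gives: every $\sigma\in\Dg(h)$ is within $8K\e$ (from $S_{2K\e}$) of some point of $\Dg(g)$, which in turn is within $K\e$ of either a point of $\Dg(f)$ or of $\Delta$; the total displacement is $<9K\e$, so either $\sigma\in\bigcup_{\tau\in\Dg(f)}B_\infty(\tau,9K\e)$ and we are done, or $\sigma$ is within $9K\e$ of $\Delta$. In the latter subcase, I would argue that such a $\sigma$ cannot exist: the simplification operator $S_{2K\e}$ is precisely designed so that any feature surviving it sits at distance $>K\e$ from $\Delta$, and moreover any feature of $\Reeb_g$ already within $K\e$ of $\Delta$ contributes nothing of distance $>K\e$ to $\Dg(h)$; combined with the fact that the only points of $\Dg(g)$ not within $K\e$ of $\Dg(f)$ are themselves within $K\e$ of $\Delta$ (hence within $3K\e \le K\e$... again the arithmetic bites). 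The \textbf{main obstacle} is exactly this bookkeeping: the naive triangle-inequality bounds produce a $9K\e$-neighborhood of $\Delta$ that the $K\e$-offset-freeness of $S_{2K\e}$ does not obviously kill, so the proof must exploit that the simplification operator acts by a controlled cancellation (points near $\Delta$ are annihilated wholesale, not merely pushed) rather than by an arbitrary $8K\e$-perturbation. I would therefore state and use the finer structural description of $S_\alpha$ from~\cite{Bauer14}—that it deletes all diagram points within $\Delta^{\alpha/2}$ and leaves the rest essentially in place up to the $4\alpha$ bottleneck bound—to conclude that a surviving point of $\Dg(h)$ traces back to a point of $\Dg(g)$ at distance $>K\e$ from $\Delta$, which by the cost-$<K\e$ matching with $\Dg(f)$ must be matched to a genuine point $\tau\in\Dg(f)$, giving $\|\sigma-\tau\|_\infty < 8K\e + K\e = 9K\e$ as claimed.
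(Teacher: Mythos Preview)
Your final paragraph is exactly the paper's argument: it uses that $S_{2K\e}$ actually \emph{removes} every point of $\Dg(g)\cap\Delta^{K\e}$ (not merely that $\Dg(h)\cap\Delta^{K\e}=\emptyset$), so each surviving $\sigma\in\Dg(h)$ lies within $8K\e$ of some $\tau'\in\Dg(g)\setminus\Delta^{K\e}$, and such $\tau'$ must be matched to a genuine $\tau\in\Dg(f)$ by the cost-$<K\e$ matching, yielding $\|\sigma-\tau\|_\infty<9K\e$. The paper states this in one line as $\Dg(h)\subseteq\bigcup_{\tau\in\Dg(g)\setminus\Delta^{K\e}}B_\infty(\tau,8K\e)\subseteq\bigcup_{\tau\in\Dg(f)}B_\infty(\tau,9K\e)$; your dead ends along the way (the raw triangle inequality $\distb(\Reeb_f,\Reeb_h)<9K\e$ and the $K\e$-offset-freeness of $\Dg(h)$ alone) are indeed insufficient, and you correctly diagnosed why.
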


\begin{proof}
Since $\distb(\Reeb_f,\Reeb_g)<K\e$, we have
$\Dg(g)\subseteq \bigcup_{\tau\in\Dg(f)} B_\infty(\tau,K\e) \cup \Delta^{K\e}$.
Since $\Reeb_h=S_{2K\e}(\Reeb_g)$,
it follows from Lemma~\ref{lem:stabsmooth} that $\distb(\Dg(h),\Dg(g))\leq 8K\e$.
Moreover, since every persistence pair in $\Dg(g)\cap\Delta^{K\e}$ is removed by $S_{2K\e}$,
it results that: 

$\Dg(h)\subseteq{\bigcup}_{\tau\in\Dg(g)\setminus \Delta^{K\e}} B_\infty(\tau,8K\e)\subseteq{\bigcup}_{\tau\in\Dg(f)} B_\infty(\tau,9K\e).$
\end{proof}

Now we bound $\distfd(\Reeb_{g'},\Reeb_g)$. 
Recall that, given an arbitrary Reeb graph $\Reeb_h$, with critical values $\Crit(h)=\{c_1,...,c_p\}$,
if $C$ is a cc of $h^{-1}(I)$, where $I$ is an open interval such that 
$\exists c_i,c_{i+1}\ {\rm s.t.}\ I\subseteq(c_i,c_{i+1})$, 
then $C$ is a {\em topological arc}, i.e. homeomorphic to an open interval.	

\begin{proposition}\label{prop:fdb}
Under the same assumptions as above,  
one has $\distfd(\Reeb_g,\Reeb_{g'})<22K\epsilon$.
\end{proposition}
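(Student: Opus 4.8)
The plan is to bound $\distfd(\Reeb_g,\Reeb_{g'})$ by the triangle inequality through the intermediate graph $\Reeb_h = S_{2K\e}(\Reeb_g)$, writing $\distfd(\Reeb_g,\Reeb_{g'}) \leq \distfd(\Reeb_g,\Reeb_h) + \distfd(\Reeb_h,\Reeb_{g'})$ where $\Reeb_{g'} = \Merge_{9K\e}(\Reeb_h)$. The first term is controlled directly by Lemma~\ref{lem:stabsmooth}, which gives $\distfd(\Reeb_g,\Reeb_h) \leq 4K\e$ (taking $\alpha = 2K\e$, the bound is $2\,\distfd \leq 4\alpha = 8K\e$, so $\distfd \leq 4K\e$). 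So the real work is to show that applying the successive merge operators $\Merge_{a_i - 9K\e,\, a_i + 9K\e}$ to $\Reeb_h$ changes the graph by at most roughly $18K\e$ in $\distfd$.

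For the merge bound, I would proceed as follows. First, note that by Lemma~\ref{lem:crit}, every point of $\Dg(h)$ lies within $\ell_\infty$-distance $9K\e$ of a point of $\Dg(f)$, hence within $9K\e$ of the set of critical values $\{a_1,\dots,a_n\}$ of $f$ in each coordinate. Moreover, since $\e < a_f/(8(1+22K))$, the windows $[a_i - 9K\e, a_i + 9K\e]$ are pairwise disjoint (their half-width $9K\e$ is far smaller than $a_f/2$), so the merge operators commute and act independently. The key geometric observation is that each $\Merge_{a_i-9K\e,a_i+9K\e}$ only affects the part of $\Reeb_h$ lying in the slab $h^{-1}([a_i-9K\e, a_i+9K\e])$, and on this slab it collapses connected components of level sets within the window, which moves function values by at most the window half-width $9K\e$ (after snapping). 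Between consecutive critical levels of $h$, the preimage of an open interval is a disjoint union of topological arcs (as recalled just before the proposition), so the combinatorial structure is simple and the displacement of any point under the full merge is bounded by $9K\e$ in the worst case, with the functional-distortion cost incurring at most a factor coming from the $\tfrac12 D(\phi,\psi)$ term versus the sup-norm terms.

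Concretely, I would construct explicit maps $\phi:\Reeb_h \to \Reeb_{g'}$ and $\psi:\Reeb_{g'}\to\Reeb_h$: $\phi$ is the natural quotient projection (it is surjective since $\Reeb_{g'}$ is a quotient of $\Reeb_h$), and $\psi$ is a section picking, for each point of $\Reeb_{g'}$, a representative in $\Reeb_h$. Then $\|h - g'\circ\phi\|_\infty \leq 9K\e$ and $\|h\circ\psi - g'\|_\infty \leq 9K\e$ directly from the snapping principle. For $D(\phi,\psi)$, I bound $|d_h(x,x') - d_{g'}(\phi(x),\phi(x'))|$: a path in $\Reeb_h$ projects to a path in $\Reeb_{g'}$ whose $h$-range changes by at most $2\cdot 9K\e = 18K\e$ (shifting both endpoints of the range), and conversely a path in $\Reeb_{g'}$ lifts to a path in $\Reeb_h$ with a comparable change, so $D(\phi,\psi) \leq 18K\e$, giving $\tfrac12 D(\phi,\psi) \leq 9K\e$. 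Hence $\distfd(\Reeb_h,\Reeb_{g'}) \leq 9K\e + (\text{error from nesting of windows})$; being slightly generous with overlapping slabs and the path-lifting argument gives $\distfd(\Reeb_h,\Reeb_{g'}) \leq 18K\e$. Combined with $\distfd(\Reeb_g,\Reeb_h)\leq 4K\e$, the triangle inequality yields $\distfd(\Reeb_g,\Reeb_{g'}) < 22K\e$, and since $K \leq 1/22$ this is $< \e$.

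The main obstacle I anticipate is the careful bookkeeping in the path-lifting argument for $D(\phi,\psi)$: one must verify that lifting a path through $\psi$ across a merged slab does not force the $h$-range to blow up by more than a bounded multiple of $9K\e$, which requires using that within each inter-critical slab of $\Reeb_h$ the level-set components are arcs, so reconnecting across a snapped critical level costs only the window width, not more. A secondary subtlety is confirming that the disjointness of the merge windows (which relies on $9K\e \ll a_f$, guaranteed by the hypothesis $\e < a_f/(8(1+22K))$ since $K\leq 1/22$ forces $9K\e < a_f/8$) is genuinely enough to make the operators non-interacting, so that the per-level bound $9K\e$ does not accumulate across the $n$ critical values.
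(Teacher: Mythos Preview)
Your approach is essentially the paper's: split via the triangle inequality through $\Reeb_h = S_{2K\e}(\Reeb_g)$, invoke Lemma~\ref{lem:stabsmooth} for $\distfd(\Reeb_g,\Reeb_h)\leq 4K\e$, and build explicit maps $\phi,\psi$ between $\Reeb_h$ and $\Reeb_{g'}$ to bound the second piece by $18K\e$. One caveat on the intermediate constants: your claims of $9K\e$ for the sup-norm terms and $18K\e$ for $D(\phi,\psi)$ are too optimistic, because the section $\psi$ must send each arc of $\Reeb_{g'}$ attached to $y_C$ to a path in $C$ ending at the single chosen point $\psi(y_C)$, and such paths need not be height-monotone within the slab---so $h\circ\psi$ can land anywhere in $(a_i-9K\e,a_i+9K\e)$ while $g'$ ranges over the same window, giving only $\|g'-h\circ\psi\|_\infty<18K\e$ and hence $D(\phi,\psi)\leq 36K\e$; since $\tfrac12\cdot 36K\e=18K\e$, the bound $\distfd(\Reeb_h,\Reeb_{g'})\leq 18K\e$ and the final $22K\e$ still go through exactly as you conclude.
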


\begin{figure}[htb]\centering
\includegraphics[width=11.6cm]{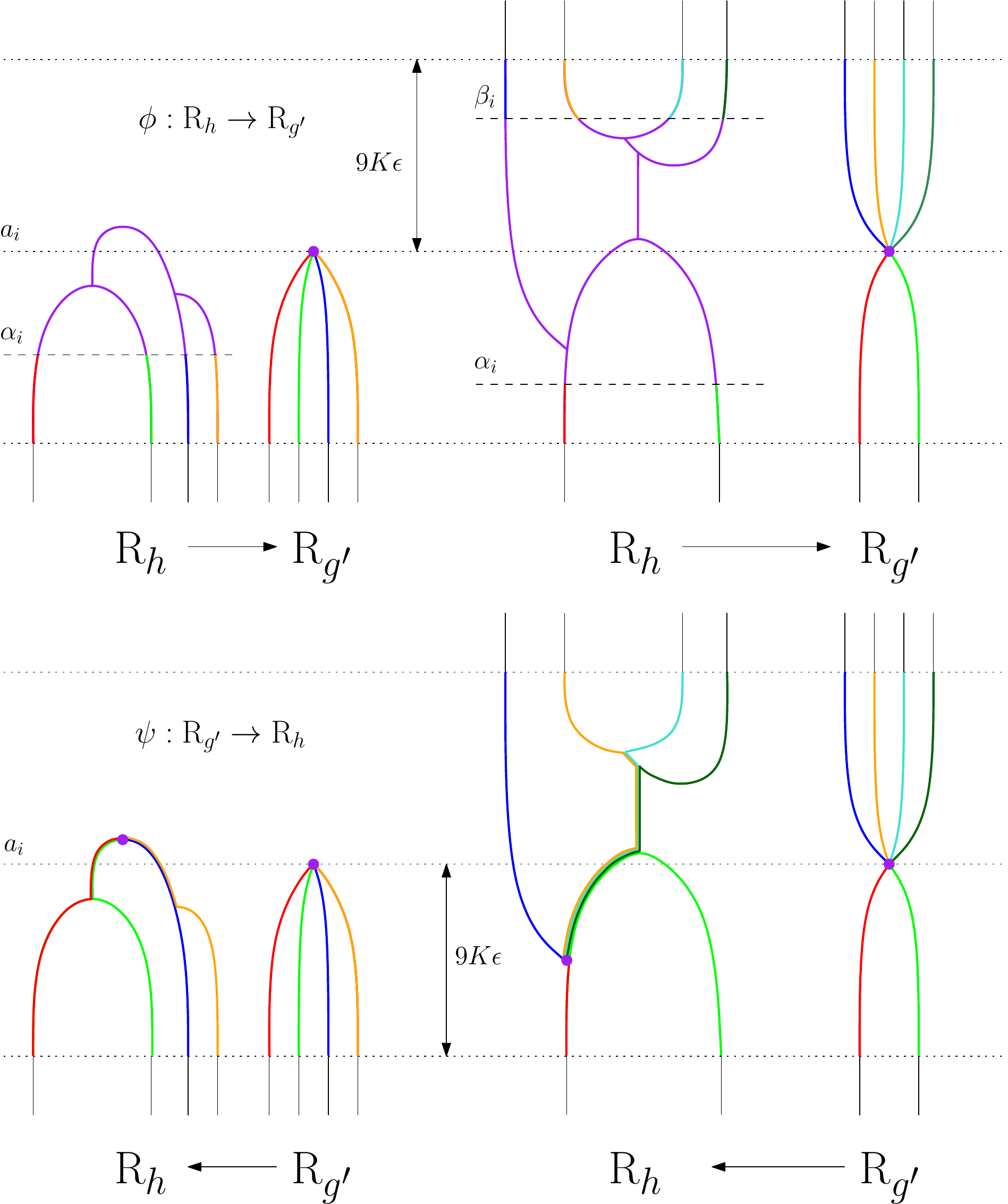}
\caption{\label{fig:fdmerge}
The effects of $\phi$ and $\psi$ around a specific critical value $a_i$ of $f$. Segments are matched according
to their colors (up to reparameterization).
}
\end{figure}

\begin{proof}
Let $\Reeb_h=S_{2K\e}(\Reeb_g)$.
We have
$
\distfd(\Reeb_{g'},\Reeb_g)\leq \distfd(\Reeb_{g'},\Reeb_h)+\distfd(\Reeb_h,\Reeb_g)
$
by the triangle inequality.
It suffices therefore to bound both
$\distfd(\Reeb_{g'},\Reeb_h)$ and $\distfd(\Reeb_h,\Reeb_{g})$.
By Lemma~\ref{lem:stabsmooth}, we have $\distfd(\Reeb_h,\Reeb_g) < 4K\e$.
Now, recall from~(\ref{eq:crit}) that the points of the extended
persistence diagram of $\Reeb_h$ are included in
${\bigcup}_{\tau\in\Dg(f)} B_\infty(\tau,9K\e)$.  Moreover, since
$\Reeb_{g'}=\Merge_{9K\e}(\Reeb_h)$, $\Reeb_{g'}$ and $\Reeb_h$ are
composed of the same number of arcs in each $[a_i+9K\e,a_{i+1}-9K\e]$.
Hence, we can define explicit continuous maps
$\phi:\Reeb_h\rightarrow\Reeb_{g'}$ and
$\psi:\Reeb_{g'}\rightarrow\Reeb_h$ as depicted in
Figure~\ref{fig:fdmerge}.  More precisely, since $\Reeb_h$ and
$\Reeb_{g'}$ are composed of the same number of arcs in each
$[a_i+9K\e,a_{i+1}-9K\e]$, we only need to specify $\phi$ and $\psi$
inside each interval $(a_i-9K\e,a_i+9K\e)$ and then ensure that the 
piecewise-defined maps are assembled consistently.  Since the critical values
of $\Reeb_h$ are within distance less that $9K\e$ of the critical
values of $f$, there exist two levels $a_i-9K\e < \alpha_i\leq\beta_i
< a_i+9K\e$ such that $\Reeb_h$ is only composed of arcs in
$(a_i-9K\e,\alpha_i]$ and $[\beta_i,a_i+9K\e)$ for each $i$ (dashed
    lines in Figure~\ref{fig:fdmerge}).  For any cc~$C$ of
    $h^{-1}((a_i-9K\e,a_i+9K\e))$, the map $\phi$ sends all points of
    $C\cap h^{-1}([\alpha_i,\beta_i])$ to the corresponding critical
    point $y_C$ created by the Merge in $\Reeb_{g'}$, and it extends the
    arcs of $C\cap h^{-1}((a_i-9K\e,\alpha_i])$ (resp. $C\cap
    h^{-1}([\beta_i,a_i+9K\e))$) into arcs of $(g')^{-1}([a_i-9K\e,a_i])$
    (resp. $(g')^{-1}([a_i,a_i+9K\e])$). 
    In return, the map $\psi$ sends the critical
    point~$y_C$ to an arbitrary point of $C$.  Then, since the Merge
    operation preserves connected components, for each arc $A'$ of
    $(g')^{-1}((a_i-9K\e,a_i+9K\e))$ connected to~$y_C$, there is at
    least one corresponding path $A$ in $\Reeb_h$ whose endpoint in
    $h^{-1}(a_i-9K\e)$ or $h^{-1}(a_i+9K\e)$ matches with the one of $A'$
    (see the colors in Figure~\ref{fig:fdmerge}).  Hence $\psi$ sends
    $A'$ to $A$ and the piecewise-defined maps
    are assembled consistently.

\noindent Let us bound the
three terms in the $\max\{\cdots\}$ in~(\ref{eq:dfd})
with this choice of maps $\phi, \psi$: 
\begin{itemize}
\item We first  bound $\|g'-h\circ\psi\|_\infty$. Let $x\in\Reeb_{g'}$. Either $g'(x)\in \bigcup_{i\in\{1,...,n-1\}} [a_i+9K\e,a_{i+1}-9K\e]$,  
and in this case we have $g'(x)=h(\psi(x))$ by definition of~$\psi$;
or, there is $i_0\in\{1,...,n\}$ such that $g'(x)\in (a_{i_0}-9K\e,a_{i_0}+9K\e)$ and then $h(\psi(x))\in (a_{i_0}-9K\e,a_{i_0}+9K\e)$.
In both cases $|g'(x)-h\circ\psi(x)| < 18K\e$. Hence, $\|g'-h\circ\psi\|_\infty < 18K\e$.
\item Since the previous proof is symmetric in $h$ and $g'$, one also has $\|h-g'\circ\phi\|_\infty < 18K\e$.
\item We now bound $D(\phi,\psi)$. Let $(x,\phi(x)),(\psi(y),y)\in C(\phi,\psi)$ (the cases $(x,\phi(x)),(x',\phi(x'))$
and $(\psi(y),y),(\psi(y'),y')$ are similar). Let $\pi_{g'}:[0,1]\rightarrow\Reeb_{g'}$ be a continuous path from $\phi(x)$ to $y$  
which achieves $d_{g'}(\phi(x),y)$.
\begin{itemize}
\item Assume $h(x)\in \bigcup_{i\in\{1,...,n-1\}} [a_i+9K\e,a_{i+1}-9K\e]$. Then one has $\psi\circ\phi(x)=x$.
Hence, $\pi_h:=\psi\circ\pi_{g'}$ is a valid path from $x$ to $\psi(y)$. Moreover, since $\|g'-h\circ\psi\|_\infty < 18K\e$,
it follows that
\begin{equation}\label{eqs:noname}
\begin{array}{l}\max\ \im(h\circ \pi_{h}) < \max\ \im(g'\circ \pi_{g'}) +18K\e,\\[0.5ex]
\min\ \im(h\circ\pi_{h}) > \min\ \im(g'\circ\pi_{g'}) -18K\e.\end{array}
\end{equation}
Hence, one has 
$$\begin{array}{l}d_h(x,\psi(y))\leq \max\ \im(h\circ\pi_{h})-\min\ \im(h\circ\pi_{h}) < d_{g'}(\phi(x),y)+36K\e,\\
-d_h(x,\psi(y))\geq \min\ \im(h\circ\pi_{h})-\max\ \im(h\circ\pi_{h}) > -d_{g'}(\phi(x),y)-36K\e.\end{array}$$
This shows that $|d_h(x,\psi(y))-d_{g'}(\phi(x),y)| < 36K\e$.
\item Assume that there is $i_0\in\{1,...,n\}$ such that $h(x)\in(a_{i_0}-9K\e,a_{i_0}+9K\e)$.
Then, by definition of $\phi,\psi$, we have  $g'(\phi(x))\in(a_{i_0}-9K\e,a_{i_0}+9K\e)$, and, since $\phi$ and $\psi$ preserve connected components,
there is a path
$\pi'_{h}:[0,1]\rightarrow\Reeb_h$ from $x$ to $\psi\circ\phi(x)$
within the interval $(a_{i_0}-9K\e,a_{i_0}+9K\e)$, which itself is
included in the interior of the offset ${\rm
  im}(g'\circ\pi_{g'})^{18K\e}$. Let now
$\pi_h$ be the concatenation of $\pi'_h$ with $\psi\circ\pi_{g'}$, which goes from $x$ to
$\psi(y)$.
Since $\|g'-h\circ\psi\|<18K\e$, it follows that ${\rm
  im}(h\circ\psi\circ\pi_{g'})\subseteq{\rm int}\ {\rm
  im}(g'\circ\pi_{g'})^{18K\e}$, and since ${\rm im}(h\circ\pi_h)={\rm
  im}(h\circ\pi'_h)\cup{\rm im}(h\circ\psi\circ\pi_{g'})$ by
concatenation, one finally has ${\rm im}(h\circ\pi_h)\subseteq {\rm
  int}\ {\rm im}(g'\circ\pi_{g'})^{18K\e}.$ Hence, the 
inequalities of~(\ref{eqs:noname}) hold, implying that $|d_h(x,\psi(y))-d_{g'}(\phi(x),y)| < 36K\e$.
\end{itemize} 
Since these inequalities hold for any couples $(x,\phi(x))$ and $(\psi(y),y)$, we deduce that $D(\phi,\psi) \leq 36K\e$.
\end{itemize}
Thus,
$\distfd(\Reeb_{h},\Reeb_g)<4K\e$  and $\distfd(\Reeb_h,\Reeb_{g'})\leq 18K\e$, so
$\distfd(\Reeb_{g'},\Reeb_g) < 22K\e$ as desired.
\end{proof}

Now we show that $\Reeb_{g'}$ is isomorphic to $\Reeb_f$ (i.e. it lies at functional distortion distance~$0$).  

\begin{proposition}\label{prop:iso2}
Under the same assumptions as above,  
one has $\distfd(\Reeb_f,\Reeb_{g'})=0$.
\end{proposition}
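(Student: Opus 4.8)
The plan is to push $\Reeb_{g'}$ so close to $\Reeb_f$ in functional distortion distance that the scale gap $a_f$ between consecutive critical values of $f$ leaves no room for $\Reeb_{g'}$ to be anything other than an isomorphic copy of $\Reeb_f$. First I would combine the triangle inequality with Proposition~\ref{prop:fdb}: $\distfd(\Reeb_f,\Reeb_{g'})\leq\distfd(\Reeb_f,\Reeb_g)+\distfd(\Reeb_g,\Reeb_{g'})$, which is $<\e+22K\e=(1+22K)\e$; then the standing hypothesis $\e<a_f/(8(1+22K))$ yields $\distfd(\Reeb_f,\Reeb_{g'})<a_f/8$. The various absolute constants baked into Definition~\ref{def:simpl} and into the statement of Theorem~\ref{th:locisom} are tuned precisely so that this last inequality comes out.

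The next step is to locate the critical values of $\Reeb_{g'}$. By Lemma~\ref{lem:crit}, the diagram of $\Reeb_h=S_{2K\e}(\Reeb_g)$ lies in $\bigcup_{\tau\in\Dg(f)}B_\infty(\tau,9K\e)$. Since the coordinates of the points of $\Dg(f)$ are critical values of $f$, and since, through the trunk/branch/hole dictionary of Section~\ref{sec:ReebGraphDef}, every genuine critical value of a Reeb graph occurs as a coordinate of some point of its diagram, every critical value of $\Reeb_h$ lies within $9K\e$ of some $a_i$. Because $18K\e<a_f$ (immediate from the bound on $\e$), the intervals $[a_i-9K\e,\,a_i+9K\e]$ are pairwise disjoint, so by Lemma~\ref{lem:merge} the operator $\Merge_{9K\e}$ collapses each of them onto its midpoint $a_i$; consequently the critical values of $\Reeb_{g'}$ form a subset of $\{a_1,\dots,a_n\}$, and, exactly like $\Reeb_f$, the graph $\Reeb_{g'}$ is a disjoint union of topological arcs over every regular interval $(a_i,a_{i+1})$.

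It remains --- and this is the crux --- to upgrade the approximate maps $\phi:\Reeb_f\to\Reeb_{g'}$ and $\psi:\Reeb_{g'}\to\Reeb_f$ realising $\distfd(\Reeb_f,\Reeb_{g'})<a_f/8$ into an honest graph isomorphism; exhibiting one immediately gives $\distfd(\Reeb_f,\Reeb_{g'})=0$, hence the proposition. The lever is that $\|f-g'\circ\phi\|_\infty$, $\|f\circ\psi-g'\|_\infty$ and $\frac{1}{2}D(\phi,\psi)$ are all below $a_f/8$, whereas at a mid-level of a regular interval two points sitting on distinct arcs are at $d_f$-distance at least $a_f/2$ (any connecting path has to leave the interval through a neighbouring critical value). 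Hence $\phi$ carries the arcs of $\Reeb_f$ over $(a_i,a_{i+1})$ injectively onto arcs of $\Reeb_{g'}$ over the same interval, and symmetrically for $\psi$, producing a bijection between the two sets of arcs; comparing, at mid-levels just below and just above a given $a_i$, the small $d_f$-distances between points that straddle $a_i$ through a fixed component of $f^{-1}(a_i)$ against the $d_f$-distances $\geq a_f/2$ within a single level then shows that each $a_i$ remains critical for $\Reeb_{g'}$ and, more precisely, that $\phi$ matches the components of $f^{-1}(a_i)$ with those of $(g')^{-1}(a_i)$ compatibly with the way the incident arcs attach. Gluing these local bijections together yields a function-preserving homeomorphism $\Reeb_f\to\Reeb_{g'}$.

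I expect this last step to be the real work. The regular-interval count is a clean consequence of the $d_f$-separation of distinct arcs, but the bookkeeping at the critical levels --- especially at vertices of degree $\geq 4$, where the arc count does not change across $a_i$ so that criticality is invisible to a naive count and one must instead track which lower arcs attach to which upper arcs through each component of $f^{-1}(a_i)$ --- requires a careful case analysis that uses the function bound and the distance bound in the definition of $\distfd$ simultaneously.
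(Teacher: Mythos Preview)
Your proposal is correct and follows essentially the same route as the paper: the triangle inequality plus Proposition~\ref{prop:fdb} gives $\distfd(\Reeb_f,\Reeb_{g'})<(1+22K)\e$, Lemma~\ref{lem:crit} together with the snapping of $\Merge_{9K\e}$ forces $\Crit(g')\subseteq\Crit(f)$, and then the distortion bound is played off against the $a_f/2$ (resp.\ $a_f$) separation of distinct arcs (resp.\ distinct critical-level components) to match arc counts and attaching data. The bookkeeping you anticipate at high-degree vertices is handled in the paper by introducing the equivalence relation $A\sim_{f,i}B\iff\zeta_f^i(A)=\zeta_f^i(B)$ on arcs and showing that the arc bijection induced by $\phi$ respects it; this packages the ``which lower arcs attach to which upper arcs'' analysis uniformly and avoids any case split on vertex degree.
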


\begin{proof}
First, recall from~(\ref{eq:crit}) that the points of the extended
persistence diagram of $\Reeb_h$ are included in
${\bigcup}_{\tau\in\Dg(f)} B_\infty(\tau,9K\e)$.
Since $\Reeb_{g'}=\Merge_{9K\e}(\Reeb_h)$, 
it follows from Lemma~\ref{lem:merge} that $\Crit(g')\subseteq\Crit(f)$.
Hence, both $\Reeb_{g'}$ and $\Reeb_f$ are composed of arcs in each $(a_i,a_{i+1})$.

Now, we show that, for each $i$, the number of arcs of $(g')^{-1}((a_i,a_{i+1}))$
and $f^{-1}((a_i,a_{i+1}))$ are the same.
By the triangle inequality and Proposition~\ref{prop:fdb}, we have: 
\begin{equation}\label{eq:tri}
\distfd(\Reeb_f,\Reeb_{g'}) \leq  \distfd(\Reeb_f,\Reeb_g)  + \distfd(\Reeb_g,\Reeb_{g'}) < (1+22K)\e.
\end{equation}
Let $\phi:\Reeb_f\rightarrow\Reeb_{g'}$ and $\psi:\Reeb_{g'}\rightarrow\Reeb_f$
be optimal continuous maps that achieve $\distfd(\Reeb_f,\Reeb_{g'})$.
Let $i\in\{1,...,n-1\}$.
Assume that there are more arcs of $f^{-1}((a_i,a_{i+1}))$ 
than arcs of $(g')^{-1}((a_i,a_{i+1}))$. For every arc $A$ of
$f^{-1}((a_i,a_{i+1}))$, let $x_A\in A$ such that $f(x_A)=\bar{a}=\frac{1}{2}(a_i+a_{i+1})$.
First, note that $\phi(x_A)$ must belong to an arc of 
$(g')^{-1}((a_i,a_{i+1}))$. Indeed, since $\|f-g'\circ\phi\|_\infty < (1+22K\e)$,
one has $g'(\phi(x_A))\in(\bar{a}-(1+22K)\e,\bar{a}+(1+22K)\e)\subseteq(a_i,a_{i+1})$.
Then, according to the pigeonhole principle,
there exist $x_A,x_{A'}$ such that $\phi(x_A)$ and $\phi(x_{A'})$
belong to the same arc of $(g')^{-1}((a_i,a_{i+1}))$.
\begin{itemize}
\item Since $x_A$ and $x_{A'}$ do not belong to the same arc, we have
$d_f(x_{A},x_{A'}) > a_f/2.$

\item Now, since $\|f-g'\circ\phi\|_\infty<(1+22K)\e$ and $\phi(x_{A}),\phi(x_{A'})$ belong to the same arc of $(g')^{-1}((a_i,a_{i+1}))$,
we also have 
$d_{g'}(\phi(x_{A}),\phi(x_{A'})) < 2(1+22K)\e$ (see Figure~\ref{fig:smallBranches}).
\end{itemize}
\begin{figure}[htb]
\centering
\includegraphics[width=10cm]{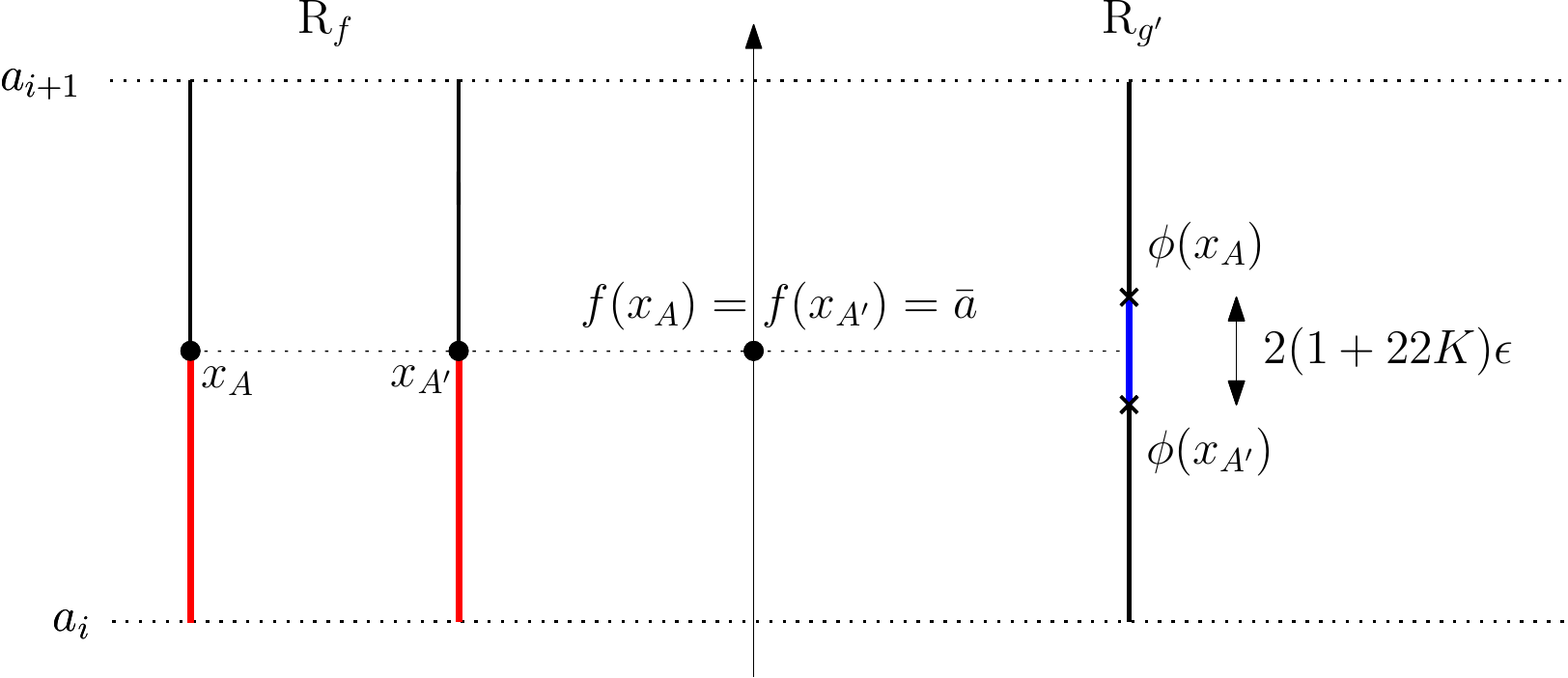}
\caption{\label{fig:smallBranches} Any path between $x_A$ and $x_{A'}$ must contain the red segments, and the blue segment is a
particular path between $\phi(x_A)$ and $\phi(x_{A'})$.}
\end{figure}

Hence, $D(\phi, \psi) \geq
|d_f(x_A,x_{A'})-d_{g'}(\phi(x_{A}),\phi(x_{A'}))|> a_f/2-2(1+22K)\e$,
which is greater than $2(1+22K)\e$ because $\e<a_f/(8(1+22K))$. Thus,
$\distfd(\Reeb_f,\Reeb_{g'})>(1+22K)\e$, which leads to a
contradiction with~(\ref{eq:tri}).  This means that there cannot be
more arcs in $f^{-1}((a_i,a_{i+1}))$ than in $(g')^{-1}((a_i,a_{i+1}))$.
Since the proof is symmetric in $f$ and $g'$, the numbers of arcs in
$(g')^{-1}((a_i,a_{i+1}))$ and in $f^{-1}((a_i,a_{i+1}))$ are actually the
same.

Finally, we show that the attaching maps of these arcs are also the
same.  In this particular graph setting, this is equivalent to showing
that corresponding arcs in $\Reeb_f$ and $\Reeb_{g'}$ have the same
endpoints.  Let $a_i$ be a critical value. Let $A_{f,i}^-$ and
$A_{f,i}^+$ (resp. $A_{g',i}^-$ and $A_{g',i}^+$) be the sets of arcs
in $f^{-1}((a_{i-1},a_i))$ and $f^{-1}((a_{i},a_{i+1}))$
(resp. $(g')^{-1}((a_{i-1},a_i))$ and $(g')^{-1}((a_{i},a_{i+1}))$).
Morevover, we let $\zeta_f^i$ and $\xi_f^i$ (resp. $\zeta_{g'}^i$ and
$\xi_{g'}^i$) be the corresponding attaching maps that send arcs to
their endpoints in $f^{-1}(a_i)$ (resp. $(g')^{-1}(a_i)$).  Let
$A,B\in A_{f,i}^-$. We define an equivalence relation $\sim_{f,i}$
between $A$ and $B$ by: $A\sim_{f,i} B$ iff $\zeta_f^i(A)=\zeta_f^i(B)$,
i.e. the endpoints of the arcs in the critical slice $f^{-1}(a_i)$ are
the same.  Similarly, $C,D\in A_{f,i}^+$ are equivalent if and only if
$\xi_f^i(C)=\xi_f^i(D)$.  One can define $\sim_{g',i}$ in the same way.
To show that the attaching maps of $\Reeb_f$ and $\Reeb_{g'}$ are the same,
we need to find a bijection $b$ between the arcs of $\Reeb_f$ and $\Reeb_{g'}$ 
such that $A\sim_{f,i}B \Leftrightarrow b(A)\sim_{g',i}b(B)$ for each $i$.

We will now define $b$ then check that it satisfies the condition.
Recall from~(\ref{eq:tri}) that 
$\distfd(\Reeb_f,\Reeb_{g'})< (1+22K)\e.$
Hence there exists a continuous map $\phi:\Reeb_f\rightarrow\Reeb_{g'}$ such that
$\|f-g'\circ\phi\|_\infty < (1+22K)\e$.
This map induces a bijection $b$ between the arcs of $\Reeb_f$ and $\Reeb_{g'}$.
Indeed, given an arc $A\in A_{f,i}^-$, let $x\in A$ such that $f(x)=\bar{a}=\frac{1}{2}(a_{i-1}+a_i)$. 
We define $b(A)$ as the arc of $A_{g,i}^-$ that contains $\phi(x)$.
The map $b$ is well-defined since 
$g'\circ\phi(x)\in\left[\bar a-(1+22K)\e,\bar a+(1+22K)\e\right]\subseteq(a_{i-1},a_i),$
hence $\phi(x)$	must belong to an arc of $(g')^{-1}((a_{i-1},a_i))$.
Let us show that  $b(A)\sim_{g',i}b(B)\Rightarrow A\sim_{f,i} B$.
Assume there exist $A,B\in A_{f,i}^-$ (the treatment of $A,B\in A_{f,i}^+$ is similar) such that $A\not\sim_{f,i} B$
and $b(A)\sim_{g',i} b(B)$. 
Let $x=\zeta_f^i(A)$ and $y=\zeta_f^i(B)$. Then we have $d_f(x,y)\geq a_f$ while $d_{g'}(\phi(x),\phi(y)) < 2(1+22K)\e$
(see Figure~\ref{fig:attach}).
Hence $|d_f(x,y)-d_{g'}(\phi(x),\phi(y))| > a_f - 2(1+22K)\e > 2(1+22K)\e$,
so $\distfd(\Reeb_f,\Reeb_{g'}) > (1+22K)\e$, which leads to a contradiction with~(\ref{eq:tri}).
The same argument applies to show that $ A\sim_{f,i} B\Rightarrow b(A)\sim_{g',i}b(B)$.
\end{proof}

\begin{figure}[htb]
\centering
\includegraphics[width=7.5cm]{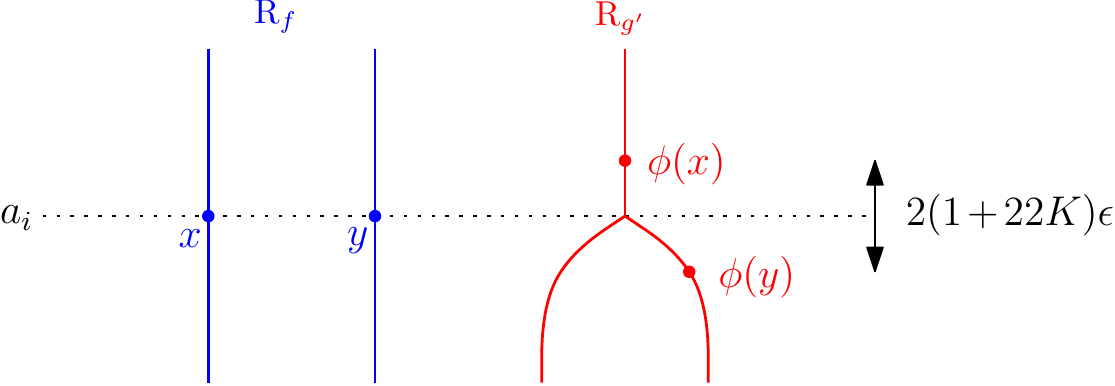}
\caption{Any path from $x$ to $y$ must go through an entire arc, hence $d_f(x,y)\geq a_f$.
On the contrary, there exists a direct path between $\phi(x)$ and $\phi(y)$, hence $d_{g'}(\phi(x),\phi(y)) < 2(1+22K)\e$.}
\label{fig:attach}
\end{figure}

\section{Induced Intrinsic Metrics}\label{sec:induced}

In this section we leverage the local equivalence given by
Theorem~\ref{th:locisom} to derive a global equivalence between the
intrinsic metrics $\hdistb$ and $\hdistfd$ induced by $\distb$ and
$\distfd$.  Note that we already know $\hdistfd$ to be equivalent to
$\hat d_{\rm GH}$ and $\hat d_{\rm I}$ since $\distfd$ is equivalent
to $d_{\rm GH}$ and $d_{\rm I}$. To the best of our knowledge, the
question whether $\distfd$, $d_{\rm I}$ or $d_{\rm GH}$ is intrinsic
on the space of Reeb graphs has not been settled, although $d_{\rm
  GH}$ itself is known to be intrinsic on the larger space of compact
metric spaces---see e.g.~\cite{Ivanov16}.

\subparagraph*{Convention.} In the following, whatever the metric  $d:\RS\times\RS\rightarrow\R_+$ under consideration, we define the class of {\em admissible paths} in
$\RS$ to be those maps $\gamma:[0,1]\rightarrow\RS$ that are
continuous in $\distfd$. This makes sense when $d$ is either $\distfd$
itself, $d_{\rm GH}$, or $d_{\rm I}$, all of which are equivalent to
$\distfd$ and therefore have the same continuous maps
$\gamma:[0,1]\rightarrow\RS$. In the case $d=\distb$ our convention
means restricting the class of admissible paths to a strict subset of
the maps $\gamma:[0,1]\rightarrow\RS$ that are continuous in~$d$ (by
Theorem~\ref{th:lowerbound}), which is required by some of our
following claims.

\begin{definition}
Let $d:\RS\times\RS\rightarrow\R_+$ be a metric on $\RS$. Let $\Reeb_f,\Reeb_g\in\RS$,
and $\gamma:[0,1]\rightarrow\RS$ be an admissible path such that $\gamma(0)=\Reeb_f$ and $\gamma(1)=\Reeb_g$.
The {\em length} of $\gamma$ induced by $d$ is defined as
$L_d(\gamma)= \sup_{n,\Sigma}\ \sum_{i=0}^{n-1} d(\gamma(t_{i}), \gamma(t_{i+1}))$
where $n$ ranges over~$\N$ and $\Sigma$ ranges over all partitions $0=t_0\leq t_1\leq ... \leq t_n=1$ of $[0,1]$.
The {\em intrinsic metric induced by $d$}, denoted $\hat d$, is defined by
$\hat d(\Reeb_f,\Reeb_g)=\rm{inf}_\gamma\ L_d(\gamma)$
where $\gamma$ ranges over all admissible paths $\gamma:[0,1]\rightarrow\RS$ 
such that $\gamma(0)=\Reeb_f$ and $\gamma(1)=\Reeb_g$.
\end{definition}

The following result is, in our view, the starting point for the study
of intrinsic metrics over the space of Reeb graphs. It comes as a
consequence of the (local or global) equivalences between $\distb$ and
$\distfd$ stated in Theorems~\ref{th:lowerbound} and~\ref{th:locisom}.
The intuition is that integrating two locally equivalent metrics along
the same path using sufficiently small integration steps
yields the same total length up to a constant factor, hence the
global equivalence between the induced intrinsic
metrics\footnote{Provided the induced metrics are defined using the
  same class of admissible paths, hence our convention.}.

\begin{thm}\label{th:strongeq}
$\hdistb$ and $\hdistfd$ are globally equivalent. Specifically, for any $\Reeb_f,\Reeb_g\in\RS$,
\begin{equation}\label{eq:strongeq}
\hdistfd(\Reeb_f,\Reeb_g)/\const\leq \hdistb(\Reeb_f,\Reeb_g)\leq 2\,\hdistfd(\Reeb_f,\Reeb_g).
\end{equation}
\end{thm}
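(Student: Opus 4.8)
The plan is to establish the two inequalities in (\ref{eq:strongeq}) separately, both by integrating the corresponding pointwise inequalities between $\distb$ and $\distfd$ along an admissible path and passing to the infimum over paths.

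The upper bound $\hdistb \leq 2\,\hdistfd$ is the easy direction. Since admissible paths are defined to be the same for both metrics (maps continuous in $\distfd$), and since Theorem~\ref{th:lowerbound} gives $\distb(\Reeb_h,\Reeb_{h'}) \leq 2\,\distfd(\Reeb_h,\Reeb_{h'})$ for every pair, for any partition $0=t_0\leq\cdots\leq t_n=1$ of $[0,1]$ one has $\sum_i \distb(\gamma(t_i),\gamma(t_{i+1})) \leq 2\sum_i \distfd(\gamma(t_i),\gamma(t_{i+1}))$, hence $L_{\distb}(\gamma)\leq 2\,L_{\distfd}(\gamma)$ for every admissible $\gamma$, and taking the infimum over paths from $\Reeb_f$ to $\Reeb_g$ gives the claim. (One should also note $\distb$-continuity of $\gamma$ is automatic from $\distfd$-continuity, so the induced length $L_{\distb}(\gamma)$ is well-defined on admissible paths.)

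The lower bound $\hdistfd \leq 22\,\hdistb$ is the substantive direction, and here is where Theorem~\ref{th:locisom} enters. The idea is: fix an admissible path $\gamma$ from $\Reeb_f$ to $\Reeb_g$; I want to show $L_{\distfd}(\gamma) \leq 22\, L_{\distb}(\gamma)$. The difficulty is that the pointwise inequality $\distfd \leq 22\,\distb$ of Theorem~\ref{th:locisom} holds only when $\distfd(\gamma(s),\gamma(t))$ is small relative to $a_{\gamma(s)}$ or $a_{\gamma(t)}$ — a threshold that depends on the (a priori unknown, and possibly degenerating) critical-value gaps of the intermediate Reeb graphs. So I cannot apply Theorem~\ref{th:locisom} uniformly. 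The remedy is the usual one for length-space arguments: use that $\gamma$ is continuous in $\distfd$ on the \emph{compact} interval $[0,1]$, hence uniformly continuous, so for any prescribed $\delta>0$ there is a partition fine enough that $\distfd(\gamma(t_i),\gamma(t_{i+1})) < \delta$ for all $i$. The remaining worry is whether $\delta$ can be taken small enough at every point of the path; here I would take $K=1/22$ in Theorem~\ref{th:locisom}, so the threshold at a graph $\Reeb_{f_0}=\gamma(t)$ becomes $a_{f_0}/16$. If $\gamma$ has the property that $\inf_{t\in[0,1]} a_{\gamma(t)} =: a_\gamma > 0$, then choosing any partition with mesh fine enough that consecutive graphs are within $\distfd < a_\gamma/16$ lets me apply Theorem~\ref{th:locisom} to each consecutive pair, giving $\distfd(\gamma(t_i),\gamma(t_{i+1})) \leq 22\,\distb(\gamma(t_i),\gamma(t_{i+1}))$, hence, for this partition, $\sum_i \distfd(\gamma(t_i),\gamma(t_{i+1})) \leq 22\sum_i \distb(\gamma(t_i),\gamma(t_{i+1})) \leq 22\,L_{\distb}(\gamma)$. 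Since $L_{\distfd}(\gamma)$ is a supremum over partitions, and refining a partition only increases the $\distfd$-sum (triangle inequality) while such refinements are cofinal among fine-enough partitions, one gets $L_{\distfd}(\gamma)\leq 22\,L_{\distb}(\gamma)$, and then infimizing over $\gamma$ yields $\hdistfd(\Reeb_f,\Reeb_g)\leq 22\,\hdistb(\Reeb_f,\Reeb_g)$.

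The main obstacle, therefore, is handling paths along which $a_{\gamma(t)}$ is not bounded away from $0$ — i.e., the critical values of the intermediate Reeb graphs may approach one another (a ``merge'' happening in the limit), so no single threshold works for the whole path. I expect this to be dealt with by a covering/compactness argument \emph{in the path parameter}: for each $t\in[0,1]$, Theorem~\ref{th:locisom} (with $K=1/22$) supplies a radius $r_t = a_{\gamma(t)}/16 >0$ such that the inequality $\distfd \leq 22\,\distb$ holds for any graph within $\distfd < r_t$ of $\gamma(t)$; by $\distfd$-continuity of $\gamma$, there is an open sub-interval $U_t \ni t$ with $\gamma(U_t) \subseteq B_{\distfd}(\gamma(t), r_t/2)$, so any two points of $\gamma(U_t)$ are within $\distfd < r_t$ of $\gamma(t)$ and hence the local equivalence applies to that pair; compactness of $[0,1]$ extracts a finite subcover, whose Lebesgue number gives a uniform mesh bound for which every consecutive pair in a fine enough partition lies in a common $U_{t_k}$. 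This is the step that must be written carefully; everything else is the routine length-functional bookkeeping sketched above. (A subtlety worth a sentence: one needs Theorem~\ref{th:locisom} to apply to the pair $(\gamma(t_i),\gamma(t_{i+1}))$, and its hypothesis $\distfd < \max\{a_{f},a_g\}/(8(1+22K))$ with $K=1/22$ reads $\distfd < \max\{a_{f},a_g\}/16$, which is implied by $\distfd < r_{t_k} \leq a_{\gamma(t_k)}/16 \leq \max\{a_{\gamma(t_i)},a_{\gamma(t_{i+1})}\}/16$ provided $t_k$ is taken to be $t_i$ or $t_{i+1}$; so one should take the common-interval index $t_k$ to be one of the two endpoints, which is always possible.)
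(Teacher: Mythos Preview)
Your proposal is correct and follows essentially the same approach as the paper: the upper bound comes directly from Theorem~\ref{th:lowerbound}, and the lower bound comes from refining an arbitrary partition via a compactness/covering argument on $[0,1]$ so that Theorem~\ref{th:locisom} (with $K=1/22$, giving threshold $\max\{a_f,a_g\}/16$) applies to each consecutive pair. The one place where you are slightly vague is exactly the subtlety you flag at the end: a straight Lebesgue-number argument only guarantees that consecutive partition points lie in some $U_{t_k}$ with $t_k$ not necessarily an endpoint, so the bound $\distfd < a_{\gamma(t_k)}/16$ does not directly give the hypothesis $\distfd < \max\{a_{\gamma(t_i)},a_{\gamma(t_{i+1})}\}/16$ of Theorem~\ref{th:locisom}. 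The paper handles this by taking the refined partition to consist precisely of the \emph{centers} of a minimal finite subcover (together with the original partition points), so that consecutive centers $t'_j,t'_{j+1}$ have overlapping intervals and hence $|t'_{j+1}-t'_j| < \max\{\delta_{t'_j},\delta_{t'_{j+1}}\}$, which yields $\distfd(\gamma(t'_j),\gamma(t'_{j+1})) < \max\{c_{t'_j},c_{t'_{j+1}}\}/16$ with the anchor automatically one of the two endpoints---exactly the fix you assert is ``always possible.''
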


\begin{proof}
We first show that $\hdistb(\Reeb_f,\Reeb_g)\leq 2\,\hdistfd(\Reeb_f,\Reeb_g)$. 
Let $\gamma$ be an admissible path 
and let $\Sigma=\{t_0,...,t_n\}$ be a partition of $[0,1]$.
Then, by Theorem~\ref{th:lowerbound}, 
$$\sum_{i=0}^{n-1} \distfd(\gamma(t_{i}), \gamma(t_{i+1}))\geq
\frac{1}{2}\sum_{i=0}^{n-1} \distb(\gamma(t_{i}), \gamma(t_{i+1})).$$
Since this is true for any partition $\Sigma$ of any finite size $n$, it
follows that
$$L_{\distfd} (\gamma) \geq \frac{1}{2} L_{\distb}(\gamma)\geq\frac{1}{2} \hdistb(\Reeb_f,\Reeb_g).$$
Again, this inequality holds for any admissible path $\gamma$, so $\hdistb(\Reeb_f,\Reeb_g)\leq 2\hdistfd(\Reeb_f,\Reeb_g)$. \\
We now show that $\hdistfd(\Reeb_f,\Reeb_g)/\const\leq \hdistb(\Reeb_f,\Reeb_g)$.
Let $\gamma$ be an admissible path 
and $\Sigma=\{t_0,...,t_n\}$ a partition of $[0,1]$.  We claim that
there is a refinement of $\Sigma$ (i.e. a partition
$\Sigma'=\{t'_0,...,t'_m\}\supseteq \Sigma$ for some $m\geq n$) such
that $\distfd(\gamma(t'_{j}),\gamma(t'_{j+1})) <
\max\{c_{t'_j},c_{t'_{j+1}}\}/16$ for all $j \in \{0,...,m-1\}$, where
$c_t>0$ denotes the minimal distance between consecutive
critical values of $\gamma(t)$. Indeed, since $\gamma$ is continuous
in $\distfd$, for any $t\in[0,1]$ there exists $\delta_t >0$ such that
$\distfd(\gamma(t),\gamma(t')) < c_t/16$ for all $t'\in [0,1]$ with
$|t-t'| < \delta_t$.  Consider the open cover
$\{(\max\{0,t-\delta_t/2\},\min\{1,t+\delta_t/2\})\}_{t\in[0,1]}$ of
$[0,1]$.  Since $[0,1]$ is compact, there exists a finite subcover
containing all the intervals $(t_i-\delta_{t_i}/2,t_i+\delta_{t_i}/2)$
for $t_i\in\Sigma$.  Assume w.l.o.g. that this subcover is minimal (if
it is not, then reduce the $\delta_{t_i}$ as much as needed).  Let then
$\Sigma'=\{t'_0,...,t'_m\}\supseteq \Sigma$ be the partition of
$[0,1]$ given by the midpoints of the intervals in this subcover,
sorted by increasing order.  Since the subcover is minimal, we have
$t'_{j+1}-t'_j < (\delta_{t'_j}+\delta_{t'_{j+1}})/2 <
\max\{\delta_{t'_j},\delta_{t'_{j+1}}\}$ hence
$\distfd(\gamma(t'_j),\gamma(t'_{j+1})) <
\max\{c_{t'_j},c_{t'_{j+1}}\}/16$ for each $j\in\{0,m-1\}$. It follows
that
\begin{align}
\sum_{i=0}^{n-1} \distfd(\gamma(t_{i}), \gamma(t_{i+1}))&\leq\sum_{j=0}^{m-1} \distfd(\gamma(t'_{j}), \gamma(t'_{j+1})){\rm\ by\ the\ triangle\ inequality\ since\ }\Sigma'\supseteq\Sigma\nonumber\\
&\leq \const\sum_{j=0}^{m-1} \distb(\gamma(t'_{j}), \gamma(t'_{j+1})){\rm\ by\ Theorem~\ref{th:locisom}\ with\ }K=1/22\nonumber\\
&\leq \const\, L_{\distb}(\gamma)\nonumber.
\end{align}
Since this is true for any partition $\Sigma$ of any finite size~$n$, it follows that 
$$\hdistfd(\Reeb_f,\Reeb_g)\leq L_{\distfd}(\gamma)\leq \const\, L_{\distb}(\gamma).$$
Again, this inequality is true for any admissible path $\gamma$, so $\hdistfd(\Reeb_f,\Reeb_g)\leq \const\,\hdistb(\Reeb_f,\Reeb_g)$.
\end{proof}

Theorem~\ref{th:strongeq} implies in particular that $\hdistb$ is a
true metric on Reeb graphs, as opposed to $\distb$ which is only a
pseudo-metric.  Moreover,
 the simplification operator defined in
Section~\ref{sec:smoothope} makes it possible to continuously deform any Reeb
graph into a trivial segment-shaped graph then into the empty graph. This shows that $\RS$ is
path-connected in $\distfd$. Since the length of such continuous deformations is finite
if the Reeb graph is finite, $\hdistfd$ and $\hdistb$ are finite
metrics. Finally, the global equivalence of $\hdistfd$ and
$\hdistb$ yields the following:

\begin{corollary}\label{cor:tcc} 
The metrics $\hdistfd$ and $\hdistb$ induce the same topology on $\RS$,
which is a refinement of the ones induced by $\distfd$ or $\distb$.
\end{corollary}

\begin{remark}
Note that the first inequality in~(\ref{eq:strongeq}) and,
consequently, Corollary~\ref{cor:tcc}, are wrong if one defines the
admissible paths for $\hdistb$ to be the whole class of maps
$[0,1]\to\RS$ that are continuous in $\distb$---hence our convention.
For instance, let us consider the two Reeb graphs $\Reeb_f$ and
$\Reeb_g$ of Figure~\ref{fig:cetrue} such that $\Dg(f)=\Dg(g)$, and
let us define $\gamma:[0,1]\rightarrow\RS$ by $\gamma(t)=\Reeb_f$ if
$t\in[0,1/2)$ and $\gamma(t)=\Reeb_g$ if $t\in[1/2,1]$.  Then $\gamma$
  is continuous in $\distb$ while it is not in $\distfd$ at $1/2$
  since $\distfd(\Reeb_f,\Reeb_g)>0$.  In this case,
  $\hdistb(\Reeb_f,\Reeb_g)\leq
  L_{\distb}(\gamma)=0<\hdistfd(\Reeb_f,\Reeb_g)$. 
\end{remark}

\section{Discussion}

In this article, we proved that the bottleneck distance, even though
it is only a pseudo-metric on Reeb graphs, can actually discriminate a
Reeb graph from the other Reeb graphs in a small enough neighborhood,
as efficiently as the other metrics do.  This theoretical result
legitimates the use of the bottleneck distance to discriminate between
Reeb graphs in applications. It also motivates the study of intrinsic
metrics, which can potentially shed new light on the structure of the
space of Reeb graphs and open the door to new applications where
interpolation plays a key part.  This work has raised numerous
questions, some of which we plan to investigate in the upcoming
months:
\begin{itemize}
\item {\bf Can the lower bound be improved?} We believe that $\e/22$
  is not optimal.  Specifically, a more careful analysis of the
  simplification operator should allow us to derive a tighter upper
  bound than the one in Lemma~\ref{lem:stabsmooth}, and
  improve the current lower bound on $\distb$.
\item {\bf Do shortest paths exist in $\RS$?} The existence of
  shortest paths achieving $\hdistb$ is an important question since a
  positive answer would enable us to define and study the {\em
    intrinsic curvature} of $\RS$.  Moreover, characterizing and
  computing these shortest paths would be useful for interpolating
  between Reeb graphs.  The existence of shortest paths is guaranteed
  if the space is complete and locally compact.  Note that $\RS$ is
  not complete, as shown by the counter-example of
  Figure~\ref{fig:ce}. Hence, we plan to restrict the focus to the
  subspace of Reeb graphs having at most $N$ features with height at
  most $H$, for fixed but arbitrary $N,H>0$. We believe this subspace
  is complete and locally compact, like its counterpart in the space
  of persistence diagrams~\cite{Blumberg14}.
  \begin{figure}[h]\centering
  \includegraphics[width=7.5cm]{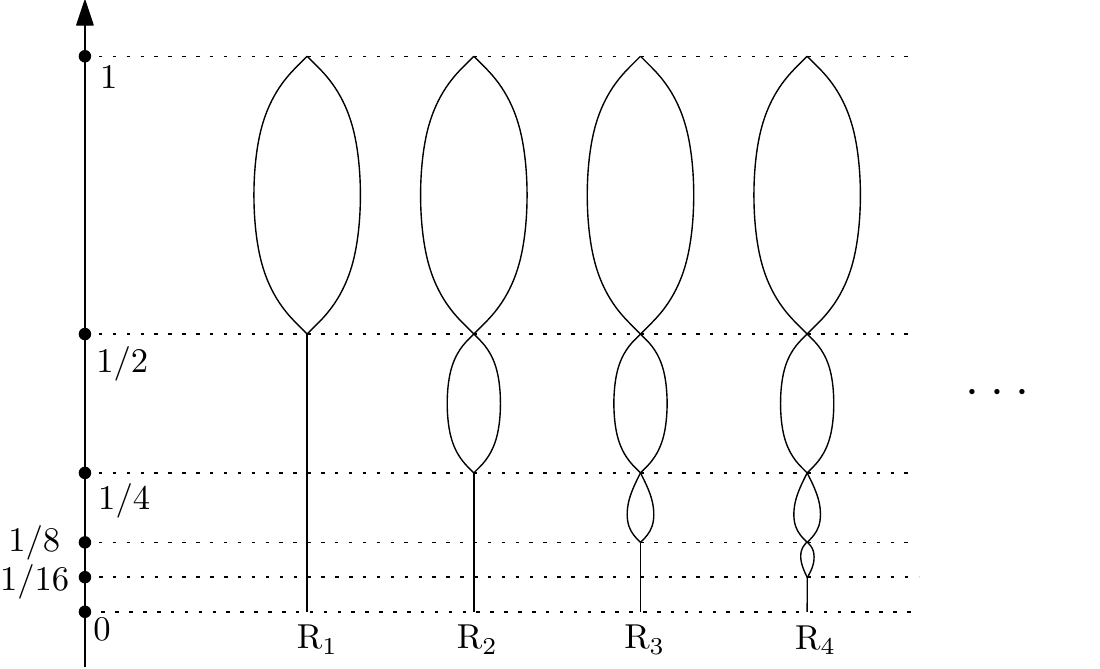}
  \caption{\label{fig:ce}
  A sequence of Reeb graphs that is Cauchy but that does not converge in $\RS$ because the number of critical values goes to $+\infty$.
  Indeed, each $\Reeb_n$ has $n+2$ critical values.}
  \end{figure}
\item {\bf Is $\RS$ an Alexandrov space?} Provided shortest paths
  exist in $\RS$ (or in some subspace thereof), we plan to determine whether
  the intrinsic curvature is bounded, either from above or from
  below.  This is interesting because barycenters in metric spaces
  with bounded curvature enjoy many useful
  properties~\cite{Ohta12}, and they can be approximated
  effectively~\cite{Ohta09}.
\item {\bf Can the local equivalence be extended to general metric
  spaces?} We have reasons to believe that our local equivalence
  result can be used to prove similar results for more general 
  classes of metric spaces than Reeb graphs. If true, this 
  would shed new light on inverse problems in persistence theory.
\end{itemize}  

\bibliographystyle{plain}
\bibliography{biblio}

\end{document}